\documentclass[twoside,leqno]{article}

\usepackage[letterpaper]{geometry}
\usepackage{siamproceedings}

\usepackage[T1]{fontenc}
\usepackage{amsfonts}
\usepackage{physics}
\usepackage{graphicx}
\usepackage{mathtools}
\usepackage{amssymb}
\usepackage{bm}

\usepackage{float}
\usepackage{hyperref}
\usepackage{algorithm}
\usepackage{algorithmicx}
\usepackage{algpseudocode}
\newtheorem{observation}{Observation}
\DeclareMathOperator{\OPT}{OPT}

\title{Improved Online Algorithms for Inventory Management Problems with Holding and Delay Costs: \\
Riding the Wave Makes Things Simpler, Stronger, \& More General}
    \author{David Shmoys\thanks{Cornell University (\email{david.shmoys@cornell.edu}, \email{vs478@cornell.edu})}
    \and Varun Suriyanarayana\footnotemark[1]    
    \and Seeun William Umboh\thanks{The University of Melbourne, ARC Training Centre in Optimisation Technologies, Integrated Methodologies, and
 Applications (OPTIMA) 
  (\email{william.umboh@unimelb.edu.au}). Supported by the Australian Government through the Australian Research Council DP240101353.}}

\begin{document}
\date{}
\maketitle

\setcounter{page}{0}
\begin{abstract}
The Joint Replenishment Problem (JRP) is a classical inventory management problem, that aims to model the trade-off between coordinating orders for multiple commodities (and their cost) with holding costs incurred by meeting demand in advance. Recently, Moseley, Niaparast and Ravi introduced a natural online generalization of the JRP in which inventory corresponding to demands may be replenished late, for a delay cost, or early, in which case there is a holding cost associated with storing it until the desired service time. They established that when the holding and delay costs are monotone and uniform across demands, there is a $30$-competitive algorithm that employs a greedy strategy and a dual-fitting based analysis; notably, they left relaxing the uniformity assumption as an open problem. This assumption is a significant limitation, and in fact, remarkable from the perspective that most online problems with only delay costs do not require uniformity, only monotonicity. 

We develop a $5$-competitive algorithm that handles arbitrary monotone demand-specific holding and delay cost functions, thus simultaneously improving upon the competitive ratio \emph{and} relaxing the uniformity assumption. Our primal-dual algorithm is in the sprit of the work Buchbinder, Kimbrel, Levi, Makarychev, and Sviridenko, which maintains a wavefront dual solution to decide when to place an order and which items to order. The main twist is in deciding which requests to serve early.
In contrast to the work of Moseley et al., which ranks early requests in ascending order of desired service time and serves them until their total holding cost matches the ordering cost incurred for that item, we extend to the non-uniform case by instead ranking in ascending order of when the delay cost of a demand would reach its current holding cost. An important special case of the JRP is the single-item lot-sizing problem. Here, Moseley et al. gave a 3-competitive algorithm when the holding and delay costs are uniform across demands. We provide a new algorithm for which the competitive ratio is $\phi+1 \approx 2.681$, where $\phi$ is the golden ratio, which again holds for arbitrary monotone holding-delay costs.
\end{abstract}

\section{Introduction}
In many problem settings, we are given a collection of clients/requests and we seek to find the most cost-efficient manner in which to serve these requests.
In practice, we often receive these requests over time. Moreover, each client has their own desired time at which point service should occur.
Since it is often impractical
to serve all requests exactly at their desired time, we often relax this
constraint. 
For example, we may allow clients to be served starting after their desired time, where there is a client-specific delay cost function to capture the cost of doing so. This model has been studied in a variety of online settings including, for example, metric service \cite{AzarGGP21}, matchings \cite{Emek16}, latency TSP \cite{BlumCCPRS94}, and facility location \cite{AzarT19}.
 
On the other hand, there are many settings where a late service is impossible, but instead we have a client-specific earliness cost function. This is a critical consideration in inventory management, where the earliness cost functions can capture phenomena such as storage/holding cost for various commodities.

The Joint Replenishment Problem (JRP) is a classical problem from inventory management captured by this framework.
In the offline JRP,
there is a collection of demands, each with their own desired service time and holding (earliness) cost function, as well as a specified demand type. We serve these demands by placing replenishment orders that have an
associated fixed cost for placing any order whatsoever, as well as a fixed cost for each type included in the order (independent of the positive quantity that is ordered). The offline JRP is known to be APX-Hard \cite{NonnerS09, BienkowskiBCDNSSY15}  but admits relatively strong LP-based approximations \cite{BienkowskiBCJNS14}; analogously strong off-line results hold when demands have both holding and delay costs. In contrast, in the online setting, whereas a 3-competitive algorithm is known \cite{BuchbinderKLMS13} when there are only delay costs, the combination of holding and delay costs appears to make the problem significantly more challenging. The best result known previously is a 30-competitive algorithm due to Moseley, Niaparast and Ravi \cite{MoseleyN025}, and this result is restricted to the special case where the delay and holding cost functions are uniform across all demands. Moreover, they showed that their algorithm requires the uniformity assumption: when the delay and holding cost functions are not uniform, then their algorithm's competitive ratio can be linear in the number of requests even for the special case where there is only a single item type, and when the delay and holding cost functions are linear.\footnote{This appears in the arXiv version of \cite{MoseleyN025}  \url{https://arxiv.org/abs/2410.18535}.} Thus, new ideas are needed to remove the uniformity assumption.

Our main result is a 5-competitive algorithm for the JRP with holding and delay costs, even when the associated cost functions are demand-specific, thereby overcoming the restriction of \cite{MoseleyN025} to uniform cost functions. The only restriction on the cost functions that we impose is monotonicity; it is important to note that without the monotonicity assumption, the problem generalizes the set-cover problem, and hence, even in the offline setting, only logarithmic performance guarantees can be achieved (unless $\mathcal {P} = \mathcal {NP}$).
It is important to note that the uniformity assumption made by \cite{MoseleyN025} is a significant limitation. For example, when there are only delay costs, the JRP and its various generalizations (e.g., concave JRP \cite{ChenKU22}, multi level aggregation \cite{BienkowskiBBCDF20,BienkowskiBBCDF21,BuchbinderFNT17}, network design with delay \cite{AzarT19, beyond-tree-embeddings}, and set cover/aggregation with delay \cite{DBLP:conf/latin/CarrascoPSV18, azar2020set}) do not require this uniformity assumption to be amenable to strong competitive analysis results. 
 In addition, our analysis is much simpler than the intricate analysis employed by \cite{MoseleyN025};  this suggests that our results might extend to the more general inventory management settings mentioned above.

In fact, even for the single-item lot-sizing problem (i.e., the special case of the JRP in which there is only one demand type), the online variant with holding and delay costs is already challenging. This is due to the fact that, when we place an order, while it is clear that we should serve all demands whose desired service time has passed, since their delay cost can only increase, it is unclear which of the demands whose service time has not yet passed we should also serve, since their holding cost can decrease in the future.
We give a $(\phi+1)$-competitive algorithm (where $\phi$ denotes the golden ratio) for the single-item lot-sizing problem with both holding and delay costs that are arbitrary monotone functions, which improves upon the previous best known ratio of $3$ (which again is limited to uniform holding-delay costs).
This significantly narrows the gap to
the best-known lower bound of $2$;
this lower bound stemmed from previous work on TCP acknowledgment~\cite{DoolyGS98}, which is equivalent to single-item lot-sizing with only delay costs. Although the focus of our work has been in the design of online algorithms, we have leveraged work on offline algorithms to obtain our results. One surprising aspect of our results for the single-item lot-sizing problem with holding and delay costs is that we give a primal-dual optimization algorithm. One might presume that this means that we have given a complete polyhedral description of the feasible region, but the optimality depends on the cost functions being monotonic.

Much is known about both the offline JRP and online JRP with only one of delay costs and holding costs. Bienkowski et al.~\cite{BienkowskiBCDNSSY15} showed that the offline version is APX-Hard even with only holding costs of $0$ or $\infty$; on the postive side,
for arbitrary monotone holding costs, there is a $1.791$-approximation algorithm
\cite{BienkowskiBCJNS14} based on solving the natural LP relaxation. This algorithm and many of its predecessors \cite{LeviRS06, LeviRSS08} work with minimal (if any) loss in approximation factor when there are both holding and delay costs. In the online setting when there are only delay costs, Buchbinder et al.~\cite{BuchbinderKLMS13} derived a 3-competitive primal-dual algorithm inspired by the offline 2-approximation of Levi et al.~\cite{LeviRS06}. The only previously known result in the online case when there are both holding and delay costs is the work of Moseley, Niaparast and Ravi~\cite{MoseleyN025}.

We adapt the online primal-dual algorithm for delay costs by \cite{BuchbinderKLMS13} to obtain our $5$-competitive algorithm, raising the dual variables in a wavefront manner and placing replenishment orders when unserved demands' dual variables are part of a tight constraint.  To handle the difficulty of determining which demands with a desired service time in the future we should add, we rank the demands with desired service times in the future based on when their delay cost would equal their current holding cost and serve those for whom this condition occurs first; this contrasts with the approach of \cite{MoseleyN025} that ranks demands in ascending order of desired service time. The intuition behind our approach is that we only regret not serving a request early if we end up serving it later when its delay cost is higher than the current holding cost.

This idea of considering the time at which a demand's delay cost would equal its current holding cost is also employed by the concurrent, independent work of Azar and Lewkowicz \cite{azarl25}, who call this future timestep the 'virtual deadline' of a demand. However, their approach to determine if/when a replenishment order ought to be placed relies more directly on the delay cost accrued by unserved demands.

\section{Preliminaries}

In this section, we give the precise statements of the decision-making models that will be addressed in this paper. Throughout this paper, we will be making use of the following standardized notation: for real numbers $x,y$, we define $(x)^+ = \max\{0, x\}$ and $\mathbf{1}_{x \geq y}$ to be $1$ if $x \geq y$ and $0$ otherwise. We let $[k]$ denote the set of positive integers that are at most $k$.

\noindent
\paragraph{The Joint Replenishment Problem with Holding and Delay Costs (JRP):}  In the offline Joint Replenishment Problem, we are given a set of item types $[N] = \{1,\ldots,N\}$, timesteps $[T] = \{1,\ldots,T\}$, a general ordering cost $K_0 \geq 0$ and item ordering costs $K_i \geq 0$ for each item type $i \in [N]$. We are also given a set of demand points $D$ where each demand point $d \in D$ is an ordered pair $(i,t) \in D$ where $i \in [N]$ specifies the item type of $d$ and $t \in [T]$ is its desired service time. Since we are not making any assumptions about the nature of the time steps (other than they are ordered), we may assume without loss of generality that, by splitting timesteps into multiple identical copies, there is at most one demand of any item type $i$ with desired service time $t$. We will use the notation $D_i$ to refer to the set of demands of item type $i$. Each demand point $d = (i,t)$ also has a \emph{holding-delay cost function} $H^{it}_s$ which, if $s \leq t$, is the holding cost associated with servicing the demand $(i,t)$ at time $s$, or, if $s>t$, is the delay cost associated with serving the demand $(i,t)$ at $s$; we also sometimes write $H^d_s$ to refer to the holding-delay cost function for a demand $d=(i,t)$. The holding-delay cost function is monotone non-increasing for $s \leq t$ and monotone non-decreasing for $s>t$. In other words, the holding cost does not increase if the demand point is served closer to desired service time; analogously, the delay cost does not increase if it is served closer to the desired service time; of course, in this description, ``closer to'' a desired service time $t$ for holding cost, and ``closer to'' for delay cost still requires those points in time $s$ to be $s\leq t$ and $s> t$, respectively. To simplify notation, we will assume without loss of generality that $H^{it}_t=0$, i.e., that if a demand is served exactly when its requested service time, there is neither a holding, nor a delay cost.

Demand points are serviced by ``replenishment orders". A \emph{replenishment order} (or \emph{order}) $o$ is specified by the timestep $s$ when the order is placed, the set of item types $U$ to be replenished, and the set of demands with item type in $U$ to be served. The cost of the order is $K_0+\sum_{i \in U} K_i$. Note that a replenishment order $o$ is ``uncapacitated" in the sense that the ordering cost incurred does not depend on the number of demands of an item type $i \in U$ that the order $o$ served, and there is no upper bound on the number of demand points that can be served. The holding-delay cost incurred by any demand point $(i,t)$ served by $o$ placed at $s$ is $H^{it}_s$. The goal is to serve every demand point while minimizing the sum of the ordering cost of replenishment orders and the total holding-delay cost incurred by all demand points. We will use the term ``general order" to refer to the placement of a replenishment order and ``item order" to refer to the inclusion of an item type within a replenishment order. We will think of the general order as costing $K_0$ and the item order as costing $K_i$.
In the natural Integer Linear Programming (ILP) formulation of the JRP, the $\{0,1\}$-variable $y_s$ indicates if a replenishment order is placed at time $s$, the $\{0,1\}$-variable $y^i_s$ indicates if an item order for item $i$ is placed at time $s$, \& the $\{0,1\}$-variable $x^{it}_s$ indicates if demand on item $i$ with desired service time $t$ is serviced at time $s$. Let $D$ denote the set of demands.

\begin{align}
\text{minimize} \label{eq:JRP-obj} \quad & \sum_{s=1}^{T} K_0y_s  + \sum_{s=1}^{T} \sum_{i=1}^N K_i y^i_s + \sum_{t=1}^{T} \sum_{s=1}^{t} x^{it}_{s} H^{it}_{s} \\
\text{subject to} \quad & \sum_{s=1}^{T} x^{it}_{s} \geq 1, \quad  && (i,t) \in D; \\
& x^{it}_{s} \leq y^i_s, \quad   && (i,t) \in D,\ s \in [T]; \\
& x^{it}_{s} \leq y_s, \quad   && (i,t) \in D,\ s \in [T]; \\
& x^{it}_{s},\, y^i_s, y_s \in \{0,1\}, \quad  && s \in [T],\ (i,t) \in D, i \in [N]. \label{eq:JRP-bin}
\end{align}
The dual of the Linear Programming (LP) relaxation of (\ref{eq:JRP-obj})--(\ref{eq:JRP-bin}) is as follows.
\begin{align}
\text{maximize} \quad & \sum_{t \in D} b_{it} \\
\text{subject to} \quad & b_{it} -z^{it}_s - z^{it}_{is} \leq H^{it}_s , \quad && (i,t) \in D,\ s \in [T], \\
& \sum_{(i,t) \in D} z^{it}_{s} \leq K_0, \quad && s\in [T], \\
& \sum_{(i,t) \in D} z^{it}_{is} \leq K_i, \quad && s\in [T], i \in [N], \\
& b_t, z^{it}_s, z^{it}_{is} \geq 0, \quad && s \in [T],\ t \in D.
\end{align}

To specify the online version of this problem, initially, the algorithm must respond to time advancing over a discrete time horizon $\{1,\ldots,T\}$; we initiate the process at time $1$ and let time advance. At each discrete moment in time $s$, the algorithm must decide whether to place an order at time $s$, and if it does, which item types to replenish, as well as which demands of these item types to serve. Moreover, initially, we do not know all of the demands. Instead, as we move forward through time, demand points arrive. To simplify notation, we will assume that if a demand $(i,t)$ arrived at a time $a$, then for all $s<a$, $H^{it}_s = \infty$ (i.e., it is impossible to serve a demand before its arrival time). Moreover, we will assume that each demand's arrival time is no later than its desired service time; that is, $a \leq t$. 

\paragraph{The Single-Item Lot-Sizing with Holding and Delay Costs:} The single-item problem is the special case of JRP in which there is only one item type. To simplify notation, we let $K$ be the total cost of placing a replenishment order of this item type. We also abuse notation slightly and use $t$ to refer to both the demand with desired service time $t$ and the timestep $t$.

The single-item lot-sizing problem admits the following Integer Linear Programming (ILP) formulation, in which $y_s$ indicates whether a replenishment order was placed at time $s$ and $x^{t}_s$ indicates whether the demand with desired service time $t$ was serviced at time $s$. Once again, let $D$ denote the set of demands.

\begin{align}
\text{minimize} \label{eq:LS-obj} \quad & \sum_{s=1}^{T} y_s K + \sum_{t=1}^{T} \sum_{s=1}^{t} x^t_{s} H^t_s \\
\text{subject to} \quad & \sum_{s=1}^{T} x^t_{s} \geq 1, \quad  && t \in D, \\
& x^t_{s} \leq y_s, \quad   && t \in D,\ s \in [T], \\
& x^t_{s},\ y_s \in \{0,1\}, \quad  && s \in [T],\ t \in D. \label{eq:LS-bin}
\end{align}
The dual of the LP relaxation of the single-item lot-sizing problem ILP (\ref{eq:LS-obj})--(\ref{eq:LS-bin}) is as follows.
\begin{align}
\text{maximize} \quad & \sum_{t \in D} b_t \\
\text{subject to} \quad & b_t -z^t_s \leq H^t_s , \quad && t \in D,\ s \in [T], \\
\label{eq:single-item-constraint} & \sum_{t \in D} z^t_s \leq K, \quad && s\in [T], \\
& b_t, z^t_s \geq 0, \quad && s \in [T],\ t \in D.
\end{align}

To simplify the arguments that follow for the rest of this paper, we will assume that all holding-delay costs and ordering costs are integer-valued. We will also assume, without loss of generality, by splitting timesteps into multiple copies, that between any pair of consecutive timesteps, the holding-delay cost function changes for exactly one demand $t$.

\section{The Single-Item Lot-Sizing Problem}
The single-item lot-sizing problem in the traditional holding cost
setting has a rich literature in which both complete polyhedral
characterizations \cite{LeviRS06} and dynamic programming solutions have played 
central roles \cite{WagnerW58,Wolsey95}. 

\subsection{An Offline Exact Optimization Algorithm}

We will give a simple primal-dual algorithm that simultaneously produces an optimal integer solution and its matching dual LP optimum.
That is, we will construct a dual solution $(b,z)$ and an integral primal solution with cost $\sum_{t \in D} b_t$. For ease of exposition, we will assume that the time horizon is continuous and the holding-delay cost function for non-integral values of time is the linear interpolation of the values attained at the integer time point $[T]$. We will use the notation $\frac{d H^{d}_s}{ds}(\tau)$ to refer to the right-derivative of the holding-delay cost of demand $d$ at time $\tau$. This algorithm is analogous to the one given by Levi et. al \cite{LeviRS06} that handles only holding costs.

\textbf{The dual solution:} The algorithm that generates the dual solution maintains a {\it wavefront value} $\tau$, 
which is initially set to $1$ and is increased gradually to $T$. The algorithm will also classify each demand as active or inactive. Initially, all demands are active. Once a demand $t$ becomes inactive, its dual variables $b_t$ and $z^t_s$ for all $s \in [T]$ remain frozen (i.e., unchanged) for the remaining duration of the algorithm. The algorithm will strive to ensure that for all active demands $t$ when the wavefront is at $\tau$, $b_t=\mathbf{1_{\tau \geq t}}H^{t}_{\tau}$ and $z^t_s = (b_{t}-H^t_s)^+$ for all $s$. We will sometimes abuse notation by referring to ``the constraint $s$" becoming tight. By this we mean the constraint $\sum_{t \in D} z^t_s \leq K$ becoming tight during the construction of the dual solution. By making the demands inactive in this situation (and hence freezing their corresponding dual variables), we ensure that the dual solution that we construct is feasible.
To achieve this, we will raise $\tau$ at a rate of $1$ and raise $b_{t}$ at a rate of $\frac{d H^{t}_s}{ds}(\tau)$ and for all demands $t$ and times $s$ such that $b_{t} \geq H^{t}_s$, raise $z^t_{s}$ at a rate of $\frac{d H^{t}_s}{ds}(\tau)$ unless for some such $t,s$, $\sum_{d \in D} z^d_s=K$, in which case, we will make $t$ inactive, and freeze $b_t$ and all $z^t_s$ variables corresponding to that demand $t$.

Before describing the primal solution, we introduce some useful notation. For
every constraint $s$ that became tight, let $s_{freeze}$ be the wavefront value at which the constraint $s$ first became tight. For every constraint $s$ that became tight, we define its corresponding interval $I_s$ to be $(s,s_{freeze}]$. For every demand $t$, let its corresponding interval $I^d_t$ be the set of timesteps $s'$ such that $H^t_{s'} \leq b_t$; note that we use the superscript $d$ to clarify that this is the interval corresponding to the demand with desired service time $t$, not the constraint associated with time $t$. 

\paragraph{Description of primal solution:} We will place replenishment orders as follows. Let $O=\emptyset$ and let $S_{tight}$ be the set of timesteps that became tight. We place an order at the largest $s \in S_{tight}$ and add $s$ to $O$. Then, choose the largest $s \in S_{tight}$ such that  $I_s \cap (\bigcup_{s' \in O} I_{s'}) = \emptyset$ and add $s$ to $O$.
(That is, we select an order point $s$ such that its associated interval is disjoint from the intervals covered by orders already included in $O$.) Repeat this process until there no longer exists some $s$ for which $I_s \cap (\bigcup_{s' \in O} I_{s'}) = \emptyset$ exists. We place replenishment orders at times in $O$. For those demands $t$ such that $z^t_s>0$ for some $s \in O$, we serve $t$ at $s$ and call them \emph{primary} demands; and for every other demand $t'$, we serve $t'$ at any $s \in I^d_{t'} \cap O$. Let $s(t)$ be the timestep at which demand $t$ was served.

The following lemma shows that the primal solution serves every non-primary demand.

\begin{lemma}
For every demand $t$, it holds that $I^d_t \cap O \neq \emptyset$.
\end{lemma}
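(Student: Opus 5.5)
Every demand $t$ eventually becomes inactive, because the wavefront reaches $T$ and in the end all demands are frozen. I take this as the starting point. If some demand were still active at the end, I would treat $\tau=T$ as a final freezing time; I expect this boundary case to be routine, by the usual convention that the last constraint is forced tight. So fix a demand $t$ and let $\tau_t$ be the wavefront value at which it became inactive. By the freezing rule there is a tight constraint $s^*$ with $z^t_{s^*}>0$, which forces $b_t>H^t_{s^*}$ and hence $s^*\in I^d_t$. Moreover, $s^*$ became tight at wavefront $s^*_{freeze}=\tau_t$. I would also note the key structural fact that $I^d_t$ is an interval of timesteps containing both $t$ and $\tau_t$. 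This holds because $H^t$ is monotone non-increasing before $t$ and non-decreasing after $t$, and $b_t=\mathbf 1_{\tau_t\ge t}H^t_{\tau_t}$ at freezing.

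The goal is to show that the greedy selection places some order inside $I^d_t$. The case $s^*\in O$ is immediate. Otherwise $s^*$ was rejected, so $I_{s^*}=(s^*,\tau_t]$ intersects $I_{s'}=(s',s'_{freeze}]$ for some $s'\in O$. I would choose this $s'$ as the first selected order, in the greedy's descending order of time, whose interval meets $I_{s^*}$. Since the greedy processes points in decreasing order and $s^*$ was skipped, that intersection already existed when $s^*$ was considered, so $s'>s^*$.

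I then split on the position of $s'$. If $s'\le\tau_t$, then $s^*<s'\le\tau_t$, and since both $s^*$ and $\tau_t$ lie in the interval $I^d_t$, so does $s'$, which finishes the proof. The remaining case is $s'>\tau_t$. Then $(s',s'_{freeze}]$ lies entirely to the right of $\tau_t$, so it cannot meet $(s^*,\tau_t]$, which is a contradiction. Hence every selected interval meeting $I_{s^*}$ has its left endpoint in $[s^*,\tau_t)$, and in particular in $I^d_t$. I still need to rule out $s'=s^*$ exactly, but that is excluded because $s^*\notin O$.

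I expect the main obstacle to be justifying cleanly that the witness constraint has $z^t_{s^*}>0$ and freeze time exactly $\tau_t$, and that $b_t\ge H^t_{s}$ for all $s$ between $s^*$ and $\tau_t$. This requires reading the dual process carefully. Constraint $s^*$ can become tight only at wavefront $\tau_t$, since $t$ was active just before and was contributing positively to it. The interval property itself follows from the unimodality of $H^t$ together with $H^t_{\tau_t}\le b_t$; when $\tau_t<t$ we have $b_t=0$, which is only possible if $z^t_{s^*}=0$. I would handle this by noting that a demand with $b_t=0$ contributes nothing to any constraint. Such a demand is only frozen as a side effect, and then $I^d_t\ni t$ must be covered by another argument, so this degenerate case needs separate care.
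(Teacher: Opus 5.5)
Your proof is the paper's argument. The constraint $s^*$ that froze $t$ lies in $I^d_t$. If $s^*\notin O$, the greedy supplies $s'\in O$ with $s'>s^*$ and $I_{s'}\cap I_{s^*}\neq\emptyset$, so $s^*<s'<s^*_{freeze}\le\tau_t$. Then $s'\in I^d_t$ because $I^d_t$ is an interval. The paper compresses that last step into ``$b_t$ froze after $s'$, so $b^{end}_t\ge H^t_{s'}$''; your unimodality observation is the justification it leaves implicit.

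Both you and the paper presuppose that every demand eventually freezes. This has to be an assumption on the instance, not something you get by declaring $\tau=T$ a freezing time. If no constraint ever becomes tight (e.g.\ all costs zero), then $O=\emptyset$ and the statement fails.

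Two things you list as obstacles should be dropped or are immediate.

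First, the claims $z^t_{s^*}>0$ and $s^*_{freeze}=\tau_t$ are false in general. By the freezing rule, $t$ also becomes inactive when $b_t$ merely reaches $H^t_{s^*}$ for a constraint $s^*$ that was already tight. Then $z^t_{s^*}=0$ and $s^*_{freeze}<\tau_t$. Your argument never needs these claims. It uses only $b^{end}_t\ge H^t_{s^*}$ and $s^*_{freeze}\le\tau_t$, which always hold. Your case $s'>\tau_t$ is still impossible, because $I_{s^*}\subseteq(s^*,\tau_t]$.

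Second, the case $\tau_t<t$ needs no separate treatment. $I^d_t$ is a sublevel set of a function that is non-increasing up to $t$, non-decreasing afterwards, and zero at $t$. So it is an interval that always contains $t$. It also contains $\tau_t$ when $\tau_t\ge t$, since then $b^{end}_t=H^t_{\tau_t}$. Hence $I^d_t\supseteq[s^*,\max\{t,\tau_t\}]\ni s'$ in every case, and your argument goes through unchanged.
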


\begin{proof}
Consider a demand $t$ and let $s$ be the constraint that caused $t$ to freeze. By the definition of freezing, we have $b^{end}_t - z^t_s = H_{s}^t$ and so $s \in I^d_t$. Thus, it suffices to  consider the case that $s \notin O$. There exists some $s' \in O$ such that  $s'>s$ and $I_s \cap I_{s'} \neq \emptyset$; otherwise $s'$ would have been added to $O$ in the process described earlier. By definition of the intervals $I_s$ and $I_{s'}$, we have that $s_{freeze}>s'$. Moreover, $b_t$ froze because of constraint $s$ and so we know that $b_t$ must have frozen no earlier than $s_{freeze}>s'$. Therefore, $b^{end}_t>H_{s'}^{t}$ implying that $s' \in I^d_t$.
\end{proof}

Next, we show that for every demand, its dual variable can pay for its holding-delay cost.

\begin{lemma}
\label{single_holding_delay_dual}
For every demand $t$, it holds that $H_{s(t)}^{t} = b^{end}_t - z^t_{s(t)}$ if it is a primary demand, and $H_{s(t)}^{t} \leq b^{end}_t$ otherwise.
\end{lemma}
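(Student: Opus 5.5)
The plan is to handle the two cases of the statement separately. Each case follows from the invariant the dual construction maintains: once the wavefront has been raised and the demand is frozen, we have $z^t_s = (b^{end}_t - H^t_s)^+$ for every $s$.

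\textbf{Establishing the invariant.} While $t$ is active, $b_t$ rises at rate $\frac{dH^t_s}{ds}(\tau)$, and every $z^t_s$ with $b_t \geq H^t_s$ rises at the same rate. So whenever $b_t > H^t_s$ we get $b_t - z^t_s = H^t_s$, and $z^t_s = 0$ otherwise. The justification is that $z^t_s$ starts increasing exactly when $b_t$ reaches $H^t_s$, and after that the two increase in lockstep. When $t$ becomes inactive, both $b_t$ and all $z^t_s$ are frozen at the same moment, so the relation $z^t_s = (b^{end}_t - H^t_s)^+$ survives to the end of the algorithm. One subtlety needs checking: while $t$ is active, a constraint $s$ that is already tight cannot be one on which $z^t_s$ is supposed to increase. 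If it were, $t$ would be frozen at that moment by the freezing rule. Hence no $z^t_s$ ever lags behind $b_t$ while $t$ is active.

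\textbf{Primary demands.} A primary demand $t$ is served at some $s(t) \in O$ with $z^t_{s(t)} > 0$. By the invariant, $z^t_{s(t)} = b^{end}_t - H^t_{s(t)}$, which rearranges to $H^t_{s(t)} = b^{end}_t - z^t_{s(t)}$. If several $s \in O$ have $z^t_s > 0$, the same identity holds for whichever one is chosen. (The paper presumably shows elsewhere that there is at most one such $s$, but this lemma does not need that.)

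\textbf{Non-primary demands.} By the preceding lemma, $I^d_t \cap O \neq \emptyset$, so $t$ is served at some $s(t) \in I^d_t$. By the definition of $I^d_t$, this means $H^t_{s(t)} \leq b^{end}_t$, which is the claim.

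The main obstacle is stating the invariant precisely in the continuous-time, piecewise-linear setting. In particular, the right-derivative convention must be handled correctly at breakpoints, and I must rule out the case where $b_t$ crosses $H^t_s$ exactly at a moment when constraint $s$ is tight. In that case $z^t_s$ stays $0$ and $b^{end}_t = H^t_s$, so the identity still holds.
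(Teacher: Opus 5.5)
Your proposal is correct and follows essentially the same route as the paper: the primary case is the identity $b^{end}_t - z^t_{s(t)} = H^t_{s(t)}$, which holds whenever $z^t_{s(t)}>0$, and the non-primary case is immediate from $s(t)\in I^d_t$. You additionally justify the invariant $z^t_s=(b^{end}_t-H^t_s)^+$, including the edge case where constraint $s$ is already tight when $b_t$ reaches $H^t_s$; the paper takes this invariant for granted.
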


\begin{proof}
Let $t$ be a demand. Suppose that $z^t_{s(t)} > 0$ and $s(t) \in O$. Then, since $z^t_{s(t)} > 0$, we have that $b^{end}_t - z^t_{s(t)} = H_{s(t)}^{t}$. On the other hand, if $z^t_{s(t)} = 0$, then $s(t) \in I^d_t$ and so $H_{s(t)}^{t} \leq b^{end}_t$ by definition of $I^d_t$.
\end{proof}

Finally, we prove that we can also charge the ordering costs to the dual variables:

\begin{lemma}
\label{single_item_ordering}
For every primary demand $t$, it holds that $z^t_s>0$ for at most one $s \in O$.
\end{lemma}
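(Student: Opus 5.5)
The plan is to show that the set of times $s$ with $z^t_s>0$ is contained in a single constraint interval $I_{s^*}$ (or the closure of it), and then use the pairwise disjointness of the intervals $\{I_s\}_{s\in O}$.

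\textbf{Step 1: where $z^t$ is positive.} By the invariant maintained by the dual process, while $t$ is active we have $z^t_s=(b_t-H^t_s)^+$, and all of $t$'s variables freeze together. So $z^t_s>0$ iff $H^t_s<b^{end}_t$. Let $\tau_f$ be the wavefront value at which $t$ froze. Since $b^{end}_t=\mathbf{1}_{\tau_f\ge t}H^t_{\tau_f}$ and $H^t$ is unimodal with minimum $0$ at $t$, the set $\{s: H^t_s<b^{end}_t\}$ is an interval of times strictly contained in the window where $H^t$ lies below its level at $\tau_f$. That window is the interval from the holding-side crossing of the level $H^t_{\tau_f}$ up to $\tau_f$. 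This is nonempty only if $\tau_f>t$. Hence every $s$ with $z^t_s>0$ satisfies $s<\tau_f$.

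\textbf{Step 2: link to tight intervals.} Suppose $s,s'\in O$ with $s<s'$ and $z^t_s,z^t_{s'}>0$. Any constraint $s''$ with $z^t_{s''}>0$ had $z^t_{s''}$ increasing until $t$ froze. Constraint $s''$ cannot have become tight strictly before $\tau_f$, since then $t$, which contributes to it, would have been frozen at that moment. So $s''_{freeze}\ge\tau_f$ for every tight $s''$ with $z^t_{s''}>0$. I expect this step to be the delicate one: tight constraints only freeze demands contributing at that moment, so the claim is exactly that $t$ was still contributing and active.

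Applying this to $s$ gives $s_{freeze}\ge\tau_f>s'$, using Step 1 for $s'<\tau_f$. Therefore $s'\in(s,s_{freeze}]=I_s$. Also $s'\in I_{s'}$ is false, since the interval is open at $s'$. So I instead compare $I_s$ with $I_{s'}$ directly: $I_{s'}=(s',s'_{freeze}]$, with $s'_{freeze}\ge\tau_f>s'$, so $I_{s'}$ is nonempty. Its points just above $s'$, up to $\min(s_{freeze},s'_{freeze})\ge\tau_f>s'$, lie in both $I_s$ and $I_{s'}$. This intersection contradicts the construction of $O$, which keeps the chosen intervals pairwise disjoint.

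\textbf{Main obstacle.} The main care needed is the discrete-time and tie issue in Step 2. One case is $s_{freeze}=\tau_f$ exactly with $s'=\tau_f$. This is excluded by the strict inequality from Step 1, since $z^t_{s'}>0$ forces $H^t_{s'}<b^{end}_t$, which cannot hold at $s'=\tau_f$. The other point is to justify that intervals are treated over the continuous wavefront, so that a half-open overlap of positive length is a genuine intersection. I would handle both using the continuous-time convention adopted for the algorithm, together with the assumption that only one demand's cost changes between consecutive timesteps.
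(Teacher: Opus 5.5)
Your proof is correct and is essentially the paper's argument rearranged: the paper takes the smallest $s\in O$ with $z^t_s>0$ and notes $b^{end}_t\le H^t_{s_{freeze}}$, since $t$ cannot keep rising once constraint $s$ is tight. It then uses disjointness ($s'\ge s_{freeze}$) and monotone delay costs to get $z^t_{s'}=0$, which is the contrapositive of your chain $s'<\tau_f\le s_{freeze}$, and your explicit handling of the nonemptiness of $I_{s'}$ (via $s'_{freeze}\ge\tau_f>s'$) and of the real-interval reading of $I_s$ makes precise points the paper's proof leaves implicit.
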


\begin{proof}
Suppose for contradiction that the statement is false for some demand $t$. Let $s$ be the smallest member of $O$ such that $z^t_s>0$ and therefore $t<s_{freeze}$. We also know that $b_t$ could not possibly have increased after $s_{freeze}$ and so, $b^{end}_t \leq H_{s_{freeze}}^{t}$. Consider some $s' \in O$ such that $s' > s$. We get that $H_{s_{freeze}}^{t} \leq H_{s'}^{t}$ since $I_{s} \cap I_{s'}$ by construction. Thus, we get that $b^{end}_t \leq H_{s'}^{t}$ which implies $z^t_{s'}=0$.
\end{proof}

Let $P$ be the set of primary demands. \cref{single_item_ordering} implies that
the total ordering cost is at most $\sum_{t \in P} z^t_{s(t)}$.
\cref{single_holding_delay_dual} implies that the total holding-delay cost is at
most $\sum_{t \notin P} b^{end}_t + \sum_{t \in P} b^{end}_t - z^t_{s(t)}$.
Therefore, the total cost is at most $\sum_t b^{end}_t$ as desired.

\subsection{Online Algorithm}
The online algorithm will maintain a dual solution in a similar way to the offline algorithm. When the algorithm is at time $\tau$, the wavefront value will also be $\tau$. To construct the primal solution, we choose a set $O$ of timesteps to place replenishment orders such that $I_{s_1} \cap I_{s_2} = \emptyset$ if $s_1,s_2 \in O$. In the online setting, we know that a constraint $s$ became tight only when the wavefront reaches $s_{freeze}$ and have to settle for placing the order at $s_{freeze}$ instead of at $s$. This explains why, for the case when there are only delay costs, one obtains a $2$-approximation. In the holding-delay cost model, now we also have to choose which demands $t>s$ we wish to serve.  Our solution to this problem is a premature service step in which we rank these demands in ascending order of when their dual variables might become large enough to pay for serving them now, i.e., when their delay cost would be equal to their current holding cost. Then, inspired by prior work on rent-or-buy problems, we will choose a set of demands with total holding cost at most $K$, leading to a competitive ratio of $3$, matching the best known guarantee for when the holding-delay cost functions are uniform \cite{MoseleyN025}. As in \cite{MoseleyN025}, we will call a demand $t$ {\it mature} if the wavefront value $\tau \geq t$ and {\it premature} if $\tau < t$ (in which case $b_t=0$). 

\paragraph{High-Level Description:} The online algorithm, which we will call {\sf{ALG-{single}}} will maintain a current dual solution in the same manner as the offline primal-dual algorithm. In particular, the dual variable of an unserved demand increases in tandem with its delay cost. If an unserviced demand $t$ freezes when the wavefront is at $\tau$ we place a replenishment order at $\tau$.  When a replenishment order is placed all unserved mature demands will be served. Next is what we call the premature service step; we will rank the premature demands $t$ in ascending order of when their delay cost would equal the holding cost associated with serving $t$ at time $\tau$.

Sometimes, a demand will be served without freezing its dual variable value. These demands will be referred to as \emph{semi-active}. Sometimes we will abuse terminology and refer to the dual variable of a demand $t$ as active/semi-active. The term unfrozen will be used to refer to both active and semi-active demands.

Conceptually, we are trying to maintain a dual solution in which for every time $s$ and every mature demand $t$ that is unserved at time $s$, we have $b_t=H^{t}_s$. This is desirable since it enables us to bound the delay cost associated with serving any demand $t$ by its dual variable $b_t$. When maintaining this property becomes impossible, we place a replenishment order and serve all demands which are past their desired service time.

To identify which extra demands we should serve at time $s$, we sort the demands with future desired service time in ascending order of when their dual variable would become large enough to freeze because of a constraint corresponding to a time $s' \leq s$. The idea is that either the next replenishment order will be triggered by a constraint corresponding to a timestep $s''>s$, in which case the dual variables must have increased by at least $K$ or all the demands $t>s$ which we served will have dual value at least $H^t_s$ before the next order is placed, again enabling us to bound the total ordering cost. Moreover, the total holding cost of demands served via any single order is at most $K$, implying that the holding costs are smaller than the ordering costs.

\paragraph{The dual solution and replenishment orders:} We will maintain the invariant that when the algorithm is at time $s$, if a demand $t$ is active, then $b_{t}=\mathbf{1}_{s>t} \cdot H^t_s$ and $z^t_q=(b_t-H^t_q)^+$ for all $q \in [T]$. When a demand arrives, it is active and we set $b_{t}=0$. The dual variable $b_{t}$ remains 0 until its desired service time. Suppose that we are at time $s$ and $U$ is the set of unfrozen (i.e., active and semi-active) demands including any demands that may have arrived at time $s$. Let $t$ be the demand whose holding-delay cost changes from $s$ to $s+1$. If $t \in U$ and increasing $b_{t}$ and $z^t_q$ to satisfy the above invariant on the dual variables, that is, setting $b_t$ to $\mathbf{1}_{s+1>t} \cdot H^t_{s+1}$ and $z^t_q$ to $(\mathbf{1}_{s+1>t} \cdot H^t_{s+1}-H^t_{q})^+$, would lead to a constraint being violated, then we do not increase $b_t$ or any $z^t_s$ variable. Instead, the demand $t$ and its corresponding dual variables will be frozen \footnote{This is different to what the dual solution described in the previous solution would do and is technically not required. But it does simplify the competitiveness proofs a little.} and $t$ now becomes inactive and if $t$ was active, then we will place a replenishment order at time $s$. By construction, the only constraints that could be violated are \eqref{eq:single-item-constraint}. Let $s'$ be the timestep corresponding to the constraint that would have become violated. If there are multiple, pick the latest such timestep. We will say that the demand $t$ \emph{triggered} the order at timestep $s$ and also abuse notation slightly to say that the constraint $s'$ triggered the order. We will also make inactive all demands $t' \in U$ such that $t' \leq s$. Once a demand $t$ is frozen, then $b_t$ and $z^t_s$ remain fixed for the remainder of the algorithm.

\begin{figure}
    \centering
    \includegraphics[width=0.5\linewidth]{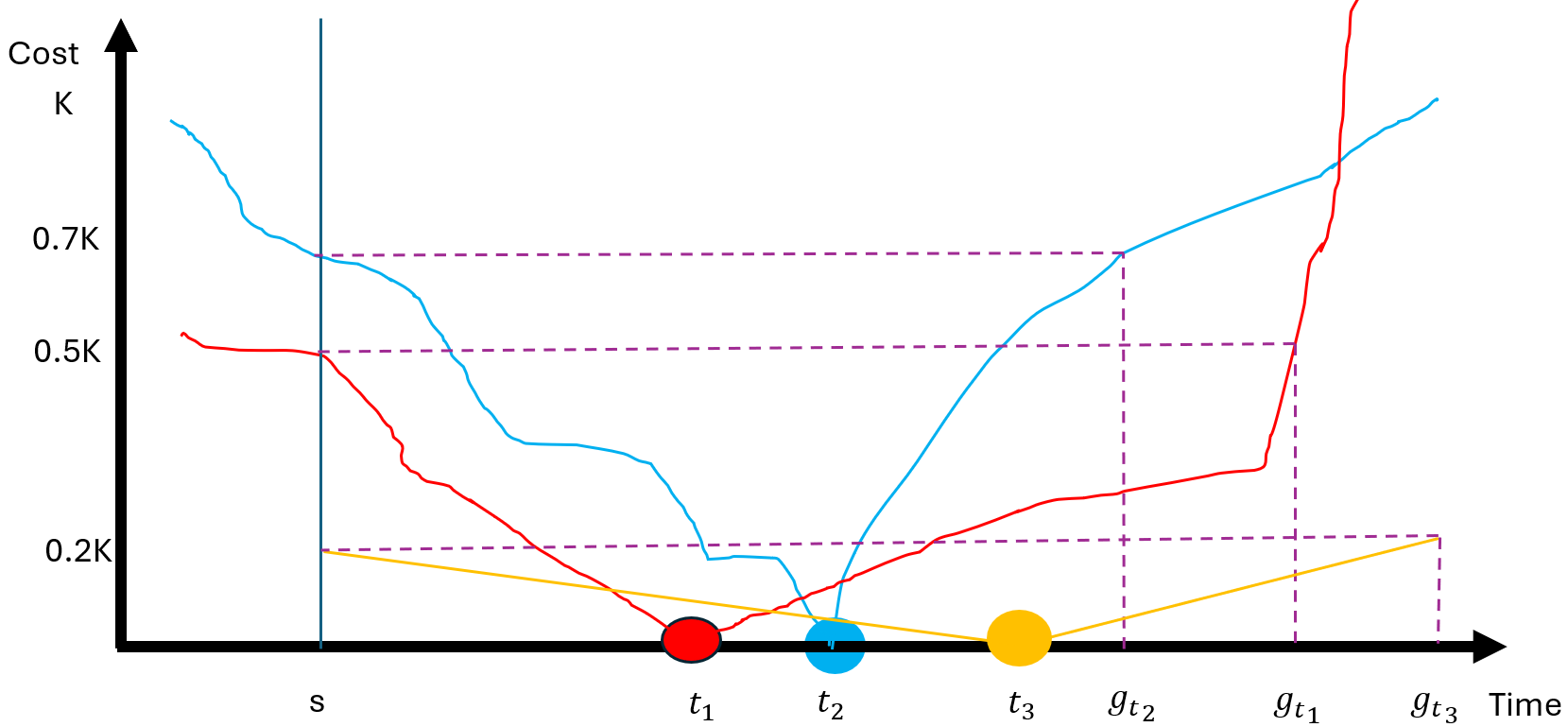}
    \caption{When placing an order at time $s$, we rank demands with desired service time $>s$ in ascending order of when their delay cost equals their current holding cost. For each demand $t_i$, this happens at time $g_{t_i}$. When considering demands in ascending order of their $g_t$ value, we consider demand $t_2$ first because $g_{t_2}<g_{t_1}<g_{t_{3}}$. $H^{t_2}_s=0.75K<K$, so we can serve $t_2$. However, if we also tried to serve $t_1$, the total holding cost would be $1.25K>K$, therefore, we serve $t_2$ but not $t_1$ at time $s$. The algorithm will not even consider serving $t_3$ at time $s$ eventhough $H^{t_2}_s+H^{t_3}_s=0.9K<K$ because $g_{t_3}>g_{t_1}$ and $t_1$ will not be served at time $s$. }
    \label{fig:placeholder}
\end{figure}
\paragraph{Identifying which demands to serve when an order is placed:} Suppose we decide to place an order at time $s$. We will serve all active demands $t \leq s$. Then we perform our premature service step. For any active demand $t>s$, let $g_t \geq t$ be the first time at which $H^t_{g_t} \geq H^t_{s}$. Go through the active demands $t>s$ in ascending order of $g_t$, adding $t$ to the set of demands served by the order as long as the total holding cost of these demands does not exceed $K$. If adding demand $t_i$ serving demand $t_i$ would cause the total holding cost to exceed $K$, then neither $t_i$ nor any demand $t_k$ with $g_{t_k}>g_{t_i}$ is served at time $s$. Let $\tau_s$ be the largest of the $g_t$ values among all demands served in this way. See Figure \ref{fig:placeholder} for an example of this process. In that example, $\tau_s=g_{t_2}$.

We will use the following facts, which follow directly by construction:
(i)
the dual solution $b,z$ is feasible and during the algorithm's execution, each variable never decreases.
(ii) if a demand $d$ is unserved at time $t$, then at all times $\tau<t$, $b_d$ was not part of a tight constraint;
(iii) when the wavefront value was $s$, it holds that $z^d_{s'} = 0$ for all $s'>s$ and all demands $d$;
(iv) the total holding cost incurred by the algorithm is at most its total ordering cost;
(v) the delay cost incurred in serving a demand $d$ is at most $b_d$;
(vi) if $z^t_s$ is increasing at any point in time, then $b^t_s$ is increasing at the same rate.

\begin{lemma}
The ordering costs are at most $\sum_{t \in D} b_{t}$.    
\label{ordering_costs_single_item_3}
\end{lemma}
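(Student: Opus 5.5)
We charge the cost $K$ of each order placed by {\sf ALG-single} to disjoint portions of the final dual objective $\sum_{t\in D} b_t$. Let the orders be placed at times $s_1<s_2<\dots<s_m$. For each order $s_j$, let $q_j$ be the constraint that triggered it, and let $t_j$ be the triggering demand. The plan is to show that between consecutive orders the dual objective grows by at least $K$. The growth must come only from demands that were unserved, or semi-active, during that window. Concretely, we will show that for each $j\ge 2$ the increase of $\sum_t b_t$ during the window $(s_{j-1},s_j]$, plus the dual values attributable to the demands served prematurely at $s_{j-1}$, is at least $K$. The first order is handled directly: constraint $q_1$ is tight at time $s_1$, and before then no demand was frozen.

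The key step is a case analysis on the triggering constraint $q_j$ relative to $s_{j-1}$.

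\textbf{Case $q_j>s_{j-1}$.} Just before $s_j$, the constraint $\sum_t z^t_{q_j}\le K$ is (nearly) tight. By fact (iii), every $z^t_{q_j}$ was $0$ when the wavefront was at $s_{j-1}<q_j$. Hence all of the mass $K$ on constraint $q_j$ was accumulated during $(s_{j-1},s_j]$. By fact (vi), each unit of increase of $z^t_{q_j}$ is matched by a unit of increase of $b_t$, so $\sum_t b_t$ increased by at least $K$ in this window. Different windows capture disjoint increments, because dual variables never decrease (fact (i)).

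\textbf{Case $q_j\le s_{j-1}$.} In this case the trigger is an old constraint. Every demand $t\le s_{j-1}$ that was active at $s_{j-1}$ was served and made inactive then. So the triggering demand $t_j$ satisfies $t_j>s_{j-1}$; it was premature at $s_{j-1}$ and was not served in the premature service step there. Let $T'$ be the set of demands served prematurely at $s_{j-1}$. By the greedy rule, $\sum_{t\in T'}H^t_{s_{j-1}}+H^{t_j}_{s_{j-1}}>K$, and $g_t\le g_{t_j}$ for all $t\in T'$. The demand $t_j$ freezes because $b_{t_j}$ would exceed its slack on a constraint $q_j\le s_{j-1}$. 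This means $b_{t_j}$ has reached at least $H^{t_j}_{q_j}\ge H^{t_j}_{s_{j-1}}$, so by monotonicity the wavefront is past $g_{t_j}$. Therefore every $t\in T'$ has also passed $g_t$. Since $t\in T'$ is semi-active, its dual keeps rising with its delay cost, so $b_t\ge H^t_{s_{j-1}}$ by time $s_j$; we must check it was not frozen earlier by a constraint becoming tight. Summing, the demands in $T'\cup\{t_j\}$ carry dual value greater than $K$, and all of it accrued after $s_{j-1}$, since these demands were premature with $b=0$ at that time.

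\textbf{Disjointness and the main obstacle.} It remains to check that the charges in the two cases never double-count. In the second case we charge the $b$-values of $T'\cup\{t_j\}$. Every $t\in T'$ is served at $s_{j-1}$ and is never again a trigger or a premature server, but its $b$ continues to rise after $s_j$. To avoid double-counting against later windows, I will charge only the portion of $b$ accumulated by time $s_j$, that is, the increments within $(s_{j-1},s_j]$, and use fact (ii) for the trigger. The main obstacle is making the second case rigorous. We must show that semi-active demands in $T'$ genuinely keep increasing their $b$ up to $H^t_{s_{j-1}}$, and do not get frozen by some constraint becoming tight. We must also handle the subtlety that $b_{t_j}$ is frozen rather than raised. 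For the latter, I would use the slack in the strict inequality together with integrality of the costs.
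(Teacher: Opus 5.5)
Your skeleton matches the paper's: show that $\sum_t b_t$ grows by at least $K$ on each window $(s_{j-1},s_j]$, and split on whether the triggering constraint $q_j$ exceeds $s_{j-1}$. The case $q_j>s_{j-1}$ is fine, and so is the disjointness bookkeeping, since demands in $T'$ have $b=0$ at $s_{j-1}$ and everything you charge is an increment inside the window.

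The gap is in the case $q_j\le s_{j-1}$. The step you call the main obstacle is left open, and it aims at the wrong target. A semi-active $t\in T'$ really can freeze before $b_t$ reaches $H^t_{s_{j-1}}$, so you cannot prove it doesn't. What you need instead is a dichotomy on the constraint $e$ that froze $t$:
\begin{itemize}
\item If $e\le s_{j-1}$, then $t$ was contributing to $e$. So $b_t\ge H^t_e\ge H^t_{s_{j-1}}$, because holding cost is non-increasing on $[e,t]$ and $t>s_{j-1}$.
\item If $e>s_{j-1}$, then by fact (iii) the whole mass $K$ on constraint $e$ accrued after $s_{j-1}$. By fact (vi) the window already has dual increase at least $K$, so you are back in your first case.
\end{itemize}
Only demands still unfrozen at $s_j$ need the ordering $g_t\le g_{t_j}$. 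This dichotomy is what the paper's claim supplies. Its phrase ``the constraints in $[s_k+1,s_{k+1}]$ were not yet tight'' should be read as ``otherwise we are done.''

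A second step fails as written, and the paper's proof makes the same inference. Let $t^*$ be the first demand rejected by the premature service step at $s_{j-1}$. The greedy rule gives $\sum_{u\in T'}H^u_{s_{j-1}}+H^{t^*}_{s_{j-1}}>K$. It does not give this inequality with $t_j$ in place of $t^*$ when $t_j\neq t^*$, because the step stops at $t^*$ and never tests later demands. In Figure~\ref{fig:placeholder}, $H^{t_2}_s+H^{t_3}_s=0.9K$, and $t_3$ can be the next trigger if an old constraint binds for it but not for $t_1$.

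The repair is to use $t^*$ instead. We have $g_{t^*}\le g_{t_j}$. An active demand cannot freeze without triggering an order, and no order occurs in $(s_{j-1},s_j)$, so $t^*$ is still active at $s_j$. Hence $b_{t^*}$ equals its current delay cost, which is at least $H^{t^*}_{s_{j-1}}$ because the wavefront has passed $g_{t_j}\ge g_{t^*}$. The trigger $t_j$ is needed only to certify this. You should also state that $t_j$ arrived by $q_j$, since $H^{t_j}_{q_j}<\infty$.

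Your concern about frozen versus raised values is legitimate; the paper's proof treats it in continuous terms. Assume unit-size cost increments, which integrality plus timestep splitting allows. Then at the freezing step $b\ge H_e$ for the binding constraint $e$, and that is all the argument needs.
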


\begin{proof}
We will prove a slightly stronger claim, the increase in $\sum_{t \in D} b_{t}$ between successive orders is at least $K$. Suppose for contradiction that this is false, i.e., between the $k$th and the $(k+1)$st order, the increase in dual variables was smaller than $K$. Let $s_k$ and $s_{k+1}$ be the times at which the $k$th and $(k+1)$st orders were placed, $t_k$ and $t_{k+1}$ be the demands that triggered these orders, $e_{k}$ and $e_{k+1}$ be the constraints that triggered these orders, and $S$ be the set of demands $t>s_k$ served at time $s_k$. 

If $e_{k+1}>s_k$, then we know that at time $s_k$, $\sum_{t \in D} z^t_{e_{k+1}}=0$. Therefore, if $\sum_{t \in D} z^t_{e_{k+1}}=K$ when the $(k+1)$st order is placed, then $\sum_{t} b_t$ must also have increased by $K$ between $s_k$ and $s_{k+1}$.

If $e_{k+1}<s_k$, then the demand $t_{k+1}$ had arrived before $s_k$, but $t_{k+1}>s_k$ (otherwise $t_{k+1}$ would have been served at $s_k$). This implies that $b_{t_{k+1}} \geq H^{t_{k+1}}_{s_k}$. In this case, we claim that for every demand $t>s_k$ served at $s_k$, $b_t \geq H^t_{s_k}$. Notice that if the claim were true, then $\sum_{t \in S} b_t + b_{t_{k+1}} > K$, otherwise, $t_{k+1}$ would have been served at time $s_k$.

To prove the claim, we know that because $e_{k+1}<s_k$, all constraints (7) associated with the interval $[s_{k}+1,s_{k+1}]$ had not yet become tight at time $s_{k+1}-1$. This implies that for all demands $t' \in S$, either $t'$ is still unfrozen, or it froze because of a constraint corresponding to a time $e \leq s_k$. This implies that $b_{t'} \geq H^{t'}_{s_k}$ for all $t' \in S$.
\end{proof}

\begin{theorem}
{\sf{ALG-{single}}} is a $3$-competitive algorithm.
\end{theorem}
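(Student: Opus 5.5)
The plan is to split the algorithm's total cost into three parts and charge each part to the final dual objective $\sum_{t\in D} b_t$. The parts are ordering cost, holding cost, and delay cost. The dual $(b,z)$ maintained by {\sf{ALG-{single}}} is feasible for the dual of the LP relaxation of (\ref{eq:LS-obj})--(\ref{eq:LS-bin}) by fact (i). Weak duality then gives $\sum_{t\in D} b_t \le \OPT_{LP}\le \OPT$. So it suffices to show that each of the three parts is at most $\sum_{t\in D} b_t$, measured at the end of the algorithm.

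First, the ordering cost is at most $\sum_{t\in D} b_t$ by Lemma~\ref{ordering_costs_single_item_3}. Since the variables never decrease (fact (i)), it is enough to bound the values at the time of the last order, and these are at most the final values.

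Second, the holding cost is at most the ordering cost by fact (iv), and hence also at most $\sum_t b_t$. This holds because each order's premature service step admits demands only while their total holding cost stays at most $K$, and every demand served at or before its desired service time has holding cost $0$ by the convention $H^t_t=0$. I would spell out one point: any demand $t\le s$ served at time $s$ incurs delay cost, not holding cost, so the only holding costs come from the premature step. Summed over orders, these are at most $K$ times the number of orders.

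Third, the delay cost is at most $\sum_t b_t$ by fact (v). For this I would verify that a mature demand $t$ served at time $s$ had $b_t = H^t_s$ while it was unfrozen, by the invariant $b_t = \mathbf 1_{s>t} H^t_s$. The delicate case is a demand frozen just before being served. When $t$ freezes at time $s$, an order is placed at $s$ (if $t$ was active) or $t$ had already been served (if it was semi-active). In addition, all unfrozen $t'\le s$ are made inactive at that same step and served by that order. So every mature demand is served at the time its $b$ last equals its delay cost, and the delay cost is at most $b_t$.

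I expect the main obstacle to be confirming that no demand freezes and is then served later at a higher delay. This is the subtle timing point: freezing happens when we refuse the increase from $s$ to $s+1$, and the order is placed at $s$, so the delay paid is $H^t_s=b_t$. With that checked, summing the three bounds gives total cost at most $3\sum_t b_t\le 3\,\OPT$.
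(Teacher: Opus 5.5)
Your proposal is correct and follows essentially the same approach as the paper's proof: ordering cost at most $\sum_t b_t$ by Lemma~\ref{ordering_costs_single_item_3}, holding cost at most ordering cost by fact (iv), delay cost at most $\sum_t b_t$ by fact (v), and dual feasibility (fact (i)) with weak duality to get $3\sum_t b_t \le 3\OPT$. One small wording slip: a demand served strictly before its desired service time does incur holding cost, so ``at or before'' should be replaced by what you actually use and correctly state in the next sentence, namely that a demand with $t \le s$ served at time $s$ incurs no holding cost.
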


\begin{proof}
The holding cost is at most the ordering cost, which in turn by the previous lemma is at most $\sum_{t \in D} b_t$. Moreover, since the delay cost incurred by each demand $t$ is at most $b_t$, the total delay cost is also at most $\sum_{t \in D} b_t$ giving us a competitive ratio of $3$. 
\end{proof}

\noindent
We show how to improve this bound to $\phi +1$ in Section \ref{sec:phi}.

\section{Extending to the Joint Replenishment Problem}
Our algorithm for JRP essentially combines the algorithm of Buchbinder et al.~\cite{BuchbinderKLMS13} to determine when and which item types to replenish with the concept of a premature backlog phase from Moseley et al.~\cite{MoseleyN025}. However, unlike \cite{MoseleyN025}, this phase will choose demands in order of when their delay cost would first exceed the holding cost of serving them now. In the same spirit as \cite{BuchbinderKLMS13}'s analysis, we will be able to conclude that the total holding cost is at most $2\OPT$ and obtain a competitive ratio of $5$. In this section, we will first describe a slightly simpler algorithm and present a weaker analysis yielding a competitive ratio of $7$. Then we will explain the necessary modifications to the algorithm and present the more careful analysis that improves this ratio to $5$.

\subsection{Algorithm Description}
At a high level, the algorithm will build a dual solution in a similar way as before. When an unserved demand freezes it will place a replenishment order. To determine which extra item types to serve, we will simulate the dual solution assuming no further arrivals for some time and add on the item types corresponding to active demands that froze during simulation. To determine which demands of these item types we serve, our premature service step will use the same ranking method as for single-item lot-sizing. We will be able to show that the solution we construct costs at most $5$-times the dual solution. 

\label{sec:jrp-alg}
We begin by describing how the algorithm maintains the dual solution. This is essentially the same as the previous section. We will refer to this algorithm as {\sf{ALGJRP-{simple}}}.
\paragraph{Dual solution and placing replenishment orders:}  The algorithm maintains a wavefront time $\tau$, which is initially $1$ and will increase gradually to $T$. We will strive to ensure that when the wavefront is at $s$, $b_{it}=\mathbf{1}_{s \geq t}H^{it}_s$ if the demand $(i,t)$ is still unfrozen and $t \leq \tau$ . Initially, all demands are active and we gradually increase the wavefront $\tau$ at a rate of $1$. We also increase the dual variables $b_{it}$ at rate $\frac{\partial H^{it}_s}{ds}$. To ensure that the constraints are satisfied, for all integer timesteps $s$ such that $b_{it}$ we increase $z^{it}_{is}$ at a rate $\frac{\partial H^{it}_s}{ds}$ unless $\sum_{(i,t) \in D_i} z^{it}_{is} = K_i$ in which case we increase $z^{it}_{s}$ at a rate of $\frac{\partial H^{it}_s}{ds}$ if $\sum_{(i,t) \in D} z^{it}_s < K_0$ or, if $\sum_{(i,t) \in D} z^{it}_s = K_0$, then the demand $(i,t)$ and the dual variable $b_{it}$ are frozen. We will say that $(i,t)$ froze because of a constraint at time $s$. If there are multiple such $s$, of which $s'$ is the largest, then we will instead say that $(i,t)$ froze because of a constraint at time $s'$.

Whenever an active demand's dual variable freezes, we place a replenishment order. If the demand $(i,t)$ froze when the wavefront value was at $\tau$ because of a constraint at time $s$, let $S$ be the set of item types such that, when the wavefront value was $\tau$, $\sum_{(i,t) \in D_i} z^{it}_{is}=K_i$. By design of the algorithm, if for some demand $(i',t')$, $z^{i't'}_s>0$, then $i' \in S$. We will say that \textbf{the constraints $S_s$ triggered the order}. By the constraints $S_s$ we are referring to the constraints $\sum_{(i,t) \in D_i} z^{it}_{is} \leq K_i$ (for $i \in S$) and $\sum_{(i,t) \in D} z^{it}_s \leq K_0$. 

\paragraph{Identifying which item types to serve when an order is placed:} Suppose we decide to place an order at wavefront time $\tau$ when a constraint becomes tight. The order will serve every active demand whose dual variable froze at wavefront time $\tau$. If the replenishment order is phase-initiating (to be defined later) these demands will be called regular demands and the item orders the algorithm places to serve them will be called regular item orders. Additionally, for each regular item order of an item type $i$, all mature demands of type $i$ are also served and their dual variables frozen.  Since we have incurred the joint ordering cost, we also want to include other item types and serve other demands if they are close to freezing. To identify which item types we should serve, we will run a simulation that simulates future changes to the dual solution (assuming no new demands arrive). Every unserved demand whose dual variable froze before the simulation ended will also be served by this order. The simulation ends when either 
(a) the increase in dual objective value is exactly $K_0$, or
(b) all dual variables have frozen.
In summary, the item types served by the order are the ones associated with active demands that either froze at time $\tau$ or during the simulation. To clarify, if a demand froze at exactly the moment the simulation ended, the demand is not served, it must freeze before.
 
\paragraph{Identifying which additional requests to serve prematurely:} Having identified which item types we are serving for the order at wavefront time $\tau$, for each item type $i$ that has been included in the order, we need to identify which demands we will serve with this order. Let $i$ be an item type included in the order. As mentioned above, we will serve all unserved demands of item type $i$ with desired service time before $\tau$ as well as all demands of item type $i$ that froze in simulation. Then, for every remaining demand $d$ of item type $i$ that have not yet been included in this order, we let $g^d_\tau$ be the future time at which the delay cost of $d$ would be equal to the holding cost of serving it at $\tau$. Rank these unserved demands $d$ of item type $i$ in ascending order of $g^d_\tau$. Our approach will, similarly to the single item case, be to add on demands of item type $i$ in ascending order of $g^d_\tau$ until the total holding cost would exceed a threshold of $K_i$.

\textbf{Clipping:} If the dual variable $b_d$ of a demand $d$ froze at value $H^d_{\tau_f}$ in a simulation corresponding to an order at time $\tau < \tau_f$, its delay cost function will be clipped: For all $\tau''>\tau_f$, $H^d_{\tau''}:=H^d_{\tau_f}$. Moreover, if a mature demand $i,t$ is served at some time $s \geq t$ but $b_t$ has not yet frozen, we clip its delay cost function in the same way $H^{it}_{\tau} \gets \min (H^{it}_{\tau},H^{it}_s)$ for all $\tau > t$.  

Notice that because every demand is served before its holding-delay cost is clipped, the clipping does not decrease the cost of the algorithm. Therefore, this is without loss of generality.

\textbf{Correct simulations:} For any specific simulation which started at some time $\tau$ and ended at some time $\tau'$, we call the simulation \emph{correct} if, for every demand $d$ that had arrived at or before time $\tau$, at all times $\tau'' \in [\tau,\tau']$, we have that the actual value of $b_d$ at time $\tau''$ is exactly the same value as it did at time $\tau''$ in simulation. Notice that by the construction of the dual solution, this is equivalent to saying that at all times $\tau'' \in [\tau,\tau']$, it holds that the $b_d$ is actually frozen at time $\tau''$ if and only if it was frozen at time $\tau''$ in the simulation associated with time $\tau$. Similarly, we will call a simulation \emph{$i$-correct} if it was correct for all demands of item type $i$.

\textbf{Phase-initiating orders and item orders:} For every replenishment order placed, there is some set of constraints $S_s$ that became tight triggering the replenishment order at time $t$. If there are multiple, we choose the one with latest corresponding $s$. We will refer to $(s,t]$ as the interval associated with this replenishment order. If the interval associated with a replenishment order is disjoint from the intervals of all previous replenishment orders, then it is called \emph{phase-initiating}. Recall that regular item orders are item orders that we decided to place in a phase initiating order before even beginning its corresponding simulation.  Similarly to phase-initiating orders, for every item order, if $S_s$ is the constraint involving an unserved demand of item type $i$ that became tight (possibly in simulation) to force an order of item $i$ at time $t$, we can let $(s,t]$ be the interval corresponding to the item order. We call an item order \emph{item-phase-initiating} if its interval is disjoint from all previous intervals of item orders of item type $i$.
\subsection{$7$-competitive analysis}
We will bound each type of cost (general ordering, item ordering, holding, delay) separately using the objective value of the dual solution $B=\sum_{d \in D} b_d$. Like in the single item case, we will use the increase in dual variables between successive orders to bound the general ordering cost and regular item ordering cost. Because some amount of the dual increase that justifies a non-regular item order at time $\tau$ is in simulation of dual increase after $\tau$, we will need to charge dual increase after the previous item order, and, if the simulation is correct, before the next item order. delay costs will be at most $B$ and holding costs will be at most the total ordering cost. It will turn out that every incorrect simulation is followed by a phase-initiating order, enabling us to bound the item ordering costs even for incorrect simulations.
Facts i-vi (from section 3.2 just before Lemma \ref{ordering_costs_single_item_3}) are true even for this algorithm. Even if for any demand $d=(i,t)$ we replace $z^d_s$ with $z^{it}_{is}$ in the statements of facts iii and vi, it would still be true. These facts enable us to bound the delay and holding costs in a simple manner.

We will use the notation $\Delta^S_{(s_1,s_2]}$ to refer to the change in $\sum_{d \in S} b_d$ as the wavefront value goes from $s_1$ to $s_2$ and
$\Delta^{S}_{s,sim}$ to refer to the change in $\sum_{d \in S} b_d$ during the simulation at $s$. We will let $\Delta^S_{(s_1,s_2],sim} = \Delta^S_{(s_1,s_2]}+ \Delta^{S}_{s,sim}$.

\begin{lemma}
Consider a set of constraints $S_s$ associated with time $s$ that were all tight at wavefront time $s' > s$. Let $S^D$ be the set of demands $(i,t)$ whose dual variables contribute to $S_s$ (i.e. $i \in S$ and either $z^{it}_{is}>0$ or $z^{it}_s>0$) and let $S$ be the set of item types corresponding to $S^D$. Then, the sum of the contributing dual variables $\sum_{d \in S^D} b_d$ must have increased by at least $K_0+\sum_{i \in S} K_i$ between wavefront times $s$ and $s'$.
\label{ordering_after_s}
\end{lemma}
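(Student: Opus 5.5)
The plan is to compare the state of the dual at wavefront time $s$ with its state at $s'$, using Fact (iii) and Fact (vi) in the form extended to the JRP.

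First, Fact (iii) gives the starting point. When the wavefront is at $s$, we have $z^{d}_{s''}=0$ and $z^{d}_{is''}=0$ for every demand $d$ and every $s''>s$. So at wavefront time $s$, every variable appearing in the constraints $S_s$ is still zero; the constraint at time $s$ itself can only start receiving mass after the wavefront passes $s$, since positive right-derivatives at $s$ are applied on $(s,s+1]$. At wavefront time $s'$ all constraints in $S_s$ are tight. Summing them gives
\[
\sum_{d\in S^D}\Bigl(z^{d}_{s}+\sum_{i\in S}\mathbf{1}_{d\in D_i}\, z^{d}_{is}\Bigr)=K_0+\sum_{i\in S}K_i .
\]
Hence the total of these $z$-variables increased by exactly $K_0+\sum_{i\in S}K_i$ between wavefront times $s$ and $s'$.

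Second, I transfer this increase to the $b$-variables. By the way the dual is raised, at any moment a demand $d=(i,t)$ raises at most one of $z^{d}_{is}$ and $z^{d}_{s}$ for the fixed time $s$. It raises $z^{d}_{is}$ until the item constraint is tight, then raises $z^{d}_s$. In either case the rate equals the rate at which $b_d$ increases (Fact (vi)). Integrating over the wavefront interval $(s,s']$ gives, for each $d\in S^D$,
\[
\Delta b_d\;\ge\;\Delta z^{d}_{is}+\Delta z^{d}_{s}=z^{d}_{is}+z^{d}_{s}\quad\text{(values at } s').
\]
Summing over $d\in S^D$ yields the claimed lower bound on the increase of $\sum_{d\in S^D}b_d$. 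Demands outside $S^D$ contribute nothing to $S_s$ by definition, so they can be ignored.

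The main obstacle is checking that nothing else changes the $b$-values or these $z$-values in a way that breaks the rate comparison. Three effects need to be ruled out:
\begin{itemize}
\item clipping, which only alters future $H$-values, never decreases existing dual variables (Fact (i)), and only lowers later increments;
\item simulated increases, which are not actual dual changes, so I must state the lemma for actual wavefront changes;
\item a demand being served but semi-active, which still raises $b_d$ and $z$ in tandem.
\end{itemize}
I would handle these by noting that the dual update rule itself, applied to actual wavefront motion, is the only mechanism that changes $z^{d}_{s}$ or $z^{d}_{is}$, and it always raises $b_d$ at the same rate. The remaining subtlety is that $z$-variables for time $s$ are zero at wavefront $s$. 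This requires $H^d$ to be interpolated so that $b_d\ge H^d_s$ can first hold with strict increase only after the wavefront passes $s$, which is exactly Fact (iii).
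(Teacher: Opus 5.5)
Your proposal is correct and follows the paper's proof essentially verbatim: the $z$-variables for time $s$ are all zero at wavefront $s$, tightness at $s'$ forces their sum to equal $K_0+\sum_{i\in S}K_i$, and each increase in one of these $z$-variables is matched by an equal increase in $b_d$, so the bound carries over to $\sum_{d\in S^D} b_d$. You also note explicitly that, for fixed $s$, a demand raises at most one of $z^{d}_{is}$ and $z^{d}_{s}$ at any moment, which gives $\Delta b_d \ge \Delta z^{d}_{is}+\Delta z^{d}_{s}$; the paper relies on this detail but leaves it implicit.
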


\begin{proof}
When the wavefront value was at $s$, $z^{it}_{is}=z^{it}_s=0$ for all $(i,t) \in D$. When the wavefront was at $s'$, $\sum_{(i,t) \in D} z^{it}_s + \sum_{i \in S} \sum_{t: (i,t) \in D} z^{it}_{is}=K_0+\sum_{i \in S} K_i$, and every increase of either a $z^{it}_{is}$ variable or a $z^{it}_{s}$ variable corresponds to an equivalent increase in $b_{it}$, the result follows.
\end{proof}

\begin{corollary}
If a phase-initiating order is placed at time $\tau$ with interval $(s,\tau ]$ and the corresponding constraint $S_s$ had a corresponding set of item types $S$, then the $\Delta_{(s,\tau]} \geq K_0+\sum_{i \in S} K_i$.
\label{pio_cost_interval}
\end{corollary}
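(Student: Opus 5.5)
This follows directly from Lemma \ref{ordering_after_s} applied with $s' = \tau$. The plan is to check the lemma's hypotheses and then pass from the sum over contributing demands to the full dual objective.

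First, identify the constraints. The phase-initiating order at time $\tau$ has interval $(s,\tau]$, so by definition it was triggered by the constraint set $S_s$ becoming tight at wavefront time $\tau$. By the definition of "the constraints $S_s$ triggered the order", this means that at wavefront $\tau$:
\begin{itemize}
\item $\sum_{(i,t)\in D} z^{it}_s = K_0$;
\item $\sum_{(i,t)\in D_i} z^{it}_{is} = K_i$ for each $i \in S$.
\end{itemize}
Since the interval $(s,\tau]$ is nonempty, we have $\tau > s$. So the constraints $S_s$ are all tight at a wavefront time $s' = \tau > s$, which is exactly the hypothesis of Lemma \ref{ordering_after_s}.

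Second, apply the lemma. Let $S^D$ be the demands contributing to $S_s$. The lemma says that $\sum_{d\in S^D} b_d$ increased by at least $K_0 + \sum_{i\in S}K_i$ between wavefront times $s$ and $\tau$.

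Third, pass to all demands. By fact (i), no dual variable ever decreases. Hence the change in $\sum_{d\in D} b_d$ over $(s,\tau]$ is at least the change in $\sum_{d\in S^D} b_d$ over the same interval. This gives $\Delta_{(s,\tau]} \ge K_0 + \sum_{i\in S}K_i$.

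The one point needing care is whether the $z$-increments counted in the lemma occur during actual wavefront progression rather than during simulations. For a phase-initiating order, the triggering tightness occurs at the real wavefront time $\tau$, because the order is placed exactly when an active demand freezes. The real dual increase over $(s,\tau]$ therefore already accounts for all the $z$-mass on $S_s$. I also need that any $z$-values on $S_s$ produced earlier, whether by simulation or otherwise, come from real increases after $s$. This holds by fact (iii): $z^{d}_{s}=0$ and $z^{d}_{is}=0$ for all $d$ while the wavefront is at $s$. The same argument as in the lemma then applies verbatim.
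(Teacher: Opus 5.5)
Your proposal is correct and follows the paper's route: the paper presents this as an immediate corollary of Lemma~\ref{ordering_after_s} with $s'=\tau$, and your step passing from $\sum_{d\in S^D} b_d$ to the full dual objective via monotonicity of the $b$ variables is the implicit step the paper leaves unstated. One small correction: the phase-initiating hypothesis is not needed for this inequality, since every order is triggered by a real freeze at wavefront $\tau$; the hypothesis only matters later, when the disjointness of intervals is used for charging.
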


\begin{observation}
If a simulation associated with an order at time $\tau$ ended because all dual variables have frozen, then the next order (if one exists) is phase initiating.
\label{no_early_termination_simulation}
\end{observation}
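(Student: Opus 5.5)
We need to show: if the simulation for the order at time $\tau$ ended in case (b), i.e., every dual variable that existed at time $\tau$ froze during the simulation, then the next order, placed at some time $\tau' > \tau$, has an associated interval $(s',\tau']$ that is disjoint from the intervals of all previous replenishment orders.

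The plan is to look at the demand that triggers the next order and show it must be new. The next order is triggered when some active (unserved) demand $d$ freezes at wavefront time $\tau'$ because the constraints $S_{s'}$ became tight. First, every demand that had arrived by time $\tau$ either was already frozen at $\tau$ or froze during the simulation. If it froze during the simulation, it is served by the order at $\tau$. Its delay cost function is then clipped at the value it froze at, so its $b_d$ never increases again and it never triggers a later order. Demands already frozen at $\tau$ are inactive and also cannot trigger anything. Hence $d$ arrived strictly after $\tau$. By our convention, $H^d_q=\infty$ for all $q$ before its arrival time $a_d>\tau$, so $z^d_q=(b_d-H^d_q)^+=0$ and $z^d_{iq}=0$ for all $q<a_d$. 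Since $d$ freezes because of the constraint at $s'$, some $z^d$ variable at $s'$ is positive, or at least $b_d \ge H^d_{s'}$, which forces $s' \ge a_d > \tau$.

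It then remains to compare with previous intervals. Every previous order was placed at some time $\le \tau$, so each previous interval $(s'',t'']$ satisfies $t''\le\tau$. Since $s'\ge\tau$, the interval $(s',\tau']$ lies entirely to the right of $\tau$, so it is disjoint from all previous intervals. The order is therefore phase-initiating.

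The main obstacle is the first step: ruling out that the trigger comes from an old demand or from a constraint at a time $\le\tau$. The freezing rule (choose the latest $s$) and clipping must be used carefully. The point is that every old demand's $z$-contributions are frozen, so they cannot be the ones becoming tight at $\tau'$. Any constraint at a time $\le \tau$ could only become newly tight through increases of variables belonging to demands arriving after $\tau$, and those increases are zero there. So whichever constraint triggers the order, $s' > \tau$.
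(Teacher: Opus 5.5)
Your proof is correct and takes the same route as the paper's one-line argument: every demand present at time $\tau$ was either already frozen or served by the order at $\tau$, so the next order must be triggered by a demand arriving after $\tau$. That forces its triggering constraint, and hence the left endpoint of the new interval, to lie after $\tau$, beyond every earlier interval. One imprecision: a demand served because it froze in the simulation can stay semi-active in reality, with $b_d$ still rising toward its clipped value, so it cannot trigger a later order simply because it is no longer active, not because $b_d$ stops increasing; for the same reason, the remark in your last paragraph that old demands' $z$-contributions are frozen is inaccurate, and it is not needed.
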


\begin{proof}
The next order must be triggered by a demand that hasn't arrived yet. Therefore, the next order's interval must begin at a time after $\tau$, making it phase-initiating.
\end{proof}

\begin{lemma}
In a simulation initiated at time $\tau$ that ended at time $\tau'$, the only constraints $s$ that become tight strictly before $\tau'$  have $s<\tau$.
\label{simulation_tight_only_past}
\end{lemma}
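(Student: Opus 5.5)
The plan is to argue by contradiction. Suppose that during the simulation started at wavefront time $\tau$, some constraint $S_s$ with $s \geq \tau$ becomes tight at a simulated wavefront time $\tau'' < \tau'$. The case $s < \tau$ is what the statement allows. The goal is to show that such an $s$ forces the simulated dual objective to have increased by at least $K_0$ by time $\tau''$. That contradicts the rule that the simulation halts as soon as the increase reaches exactly $K_0$.

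\textbf{Step 1.} By fact (iii), extended to the $z^{it}_{is}$ variables as noted before Lemma~\ref{ordering_after_s}, when the wavefront is at $\tau$ we have $z^{it}_{s}=z^{it}_{is}=0$ for all demands and all $s>\tau$. The boundary case $s=\tau$ is handled the same way, since no $z$ variable of time $s$ can be positive before the wavefront passes $s$. So at the start of the simulation, every $z$ variable indexed by $s$ is zero.

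\textbf{Step 2.} For constraint $S_s$ to become tight, the general-order constraint at $s$ must reach $K_0$. Only then does any demand freeze because of $s$; until then, demands overflow from item constraints into $z^{it}_s$. Hence $\sum_{(i,t)} z^{it}_s$ must rise from $0$ to $K_0$ during the simulated interval $(\tau,\tau'']$. Fact (vi) says every unit increase of $z^{it}_s$ comes with an equal increase of $b_{it}$. This is the same accounting as in Lemma~\ref{ordering_after_s}, now applied inside the simulation. So the simulated increase $\Delta_{\tau,sim}$ up to $\tau''$ is at least $K_0$. This already counts only the $z_s$ part; the item parts $K_i$ only add more.

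\textbf{Step 3.} The simulation stops at the first moment the dual increase equals $K_0$, so $\tau' \le \tau''$. This contradicts $\tau''<\tau'$, and the lemma follows.

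The main obstacle is the bookkeeping in Step 2. First, a single $b_{it}$ increase may feed several $z$ variables at once. That is fine, because we only need a lower bound on the increase of $\sum b$ attributable to the single constraint $s$. Second, I must check that frozen demands, and clipped delay functions from earlier simulations, do not contribute $z$ increases without matching $b$ increases. Clipping only removes future increases, so fact (vi) still applies. Finally, I must confirm that the tightness in question really means the $K_0$ constraint reaching equality, which matches the freezing rule in the algorithm description.
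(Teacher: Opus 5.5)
Your proof is correct and is essentially the paper's argument. At the start of the simulation every $z^{it}_s$ with $s$ at or beyond $\tau$ is zero, and each unit of increase in $\sum_{(i,t)} z^{it}_s$ is matched by a unit of dual increase, so making such a constraint tight would use up the full $K_0$ of simulated increase, by which point the simulation has already stopped. Your explicit handling of the boundary case $s=\tau$, which the paper's proof (stated only for $s>\tau$) leaves implicit, and your remark that ``tight'' must mean the $K_0$ constraint, are useful refinements rather than a different route.
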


\begin{proof}
For any time $s>\tau$, $z^{it}_s=0$ for all $(i,t) \in D$. Therefore, the total increase in the dual objective value must be at least $K_0$ before $\sum_{(i,t) \in D} z^{it}_s=K_0$. Since the simulation ends the moment the total dual increase is $K_0$, we have the desired result.  
\end{proof}

\begin{lemma}
If an order is placed at time $\tau$, the corresponding simulation is correct and the next order is placed at time $\tau_2$, then the total increase in dual variables between $\tau$ and $\tau_2$, $\Delta_{(\tau,\tau_2]}$, is at least $K_0$.
\label{NPIO_general_bounded}
\end{lemma}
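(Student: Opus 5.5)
We prove this by a case split on how the simulation at $\tau$ ended, mirroring the proof of Lemma~\ref{ordering_costs_single_item_3} but replacing the premature-service argument with correctness of the simulation. Let $\tau'$ be the wavefront time at which the simulation started at $\tau$ terminated.

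\textbf{Case (b): all dual variables froze in simulation.} By Observation~\ref{no_early_termination_simulation}, the next order at $\tau_2$ is phase-initiating. Its interval $(s,\tau_2]$ is disjoint from the interval of the order at $\tau$, so $s \geq \tau$. Corollary~\ref{pio_cost_interval} then gives $\Delta_{(s,\tau_2]} \geq K_0$. Since dual variables never decrease (fact (i)), we get $\Delta_{(\tau,\tau_2]} \geq \Delta_{(s,\tau_2]} \geq K_0$.

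\textbf{Case (a): the simulated increase is exactly $K_0$.} First we show $\tau_2 \geq \tau'$.
\begin{itemize}
\item By Lemma~\ref{simulation_tight_only_past}, any constraint tight strictly before $\tau'$ in simulation is indexed by some $s<\tau$.
\item By correctness, every demand that arrived by $\tau$ freezes in reality exactly when it freezes in simulation on $[\tau,\tau']$.
\item Every demand that froze strictly before $\tau'$ in simulation was served by the order at $\tau$. So no still-unserved, previously arrived demand freezes in $(\tau,\tau')$.
\item A demand arriving after $\tau$ has $z_s=0$ for all $s<\tau$, since $z$ only grows for $s$ with $H_s$ finite. So it cannot be frozen by a constraint $s<\tau$. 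It could only be frozen by a new constraint $s'>\tau$, which by fact (iii) and the argument of Lemma~\ref{ordering_after_s} requires an increase of $K_0$ after $\tau$.
\end{itemize}

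Now split on whether such an increase occurs.
\begin{itemize}
\item If the order at $\tau_2$ is triggered by a constraint $s'\geq\tau$, Lemma~\ref{ordering_after_s} gives $\Delta_{(\tau,\tau_2]}\geq K_0$ directly.
\item Otherwise the trigger is a constraint $s'<\tau$ acting on an old demand. By the above, this forces $\tau_2\geq \tau'$. Correctness then says the real increase over $(\tau,\tau']$ of the old demands' duals equals the simulated increase, which is $K_0$. Monotonicity gives $\Delta_{(\tau,\tau_2]}\geq \Delta_{(\tau,\tau']} \geq K_0$.
\end{itemize}

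\textbf{Main obstacle.} The delicate step is the second sub-case. We must rule out an order at some $\tau_2<\tau'$ triggered by a past constraint on an old demand that did not freeze in simulation. Such a demand would have to freeze in reality at a time it did not freeze in simulation, contradicting correctness. The equivalence stated in the definition of a correct simulation (equal $b$ values if and only if equal freezing status) is exactly what is needed.

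A further subtlety is that new arrivals add $z$ mass to constraints $s\geq\tau$. They could make a constraint tight earlier than in simulation and freeze an old demand "early". Correctness excludes this for old demands. For new demands, any such tightening already certifies an increase of $K_0$ after $\tau$, so either way the bound holds.
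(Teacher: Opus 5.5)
Your proof is correct and essentially the paper's argument. The paper splits first on whether the order at $\tau_2$ is phase-initiating (handled by Corollary~\ref{pio_cost_interval}); in the non-phase-initiating case it uses the contrapositive of Observation~\ref{no_early_termination_simulation} to get termination by a $K_0$ increase and then argues $\tau_{sim}\le\tau_2$ just as in your second sub-case, so your nesting (end-type first, with your case (b) already covered by your first sub-case) is only a reordering, and you justify more explicitly why the triggering demand must be an old, unserved demand and hence why correctness forces $\tau_2\ge\tau'$.
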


\begin{proof}
If the order at $\tau_2$ was phase-initiating, then \cref{pio_cost_interval} implies the desired result. Otherwise, by the contrapositive of \cref{no_early_termination_simulation}, the simulation must have ended because the total increase of dual objective value in the simulation was $K_0$. The simulation must have ended at some wavefront value $\tau_{sim} \leq \tau_2$ (otherwise the active demand(s) that froze before $\tau_2$ would have been served due to the simulation at time $\tau$). Therefore, $\Delta_{(\tau,\tau_2]} \geq \Delta_{(\tau,\tau_{sim}]} \geq K_0$. 
\end{proof}

The following lemma will only be leveraged in the $5$-competitive analysis we go through later, but is nevertheless conceptually illuminating.

\begin{lemma}
    A simulation is correct if the next order is non-phase initiating.
\label{pio_incorrect_simulation}
\end{lemma}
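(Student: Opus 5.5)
We argue by contrapositive: if the simulation associated with an order at wavefront time $\tau$ (ending at $\tau'$) is incorrect, then the next order, placed at some time $\tau_2$, is phase-initiating.

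\textbf{Step 1: locating the deviation.} Incorrectness means there is a demand $d$ that arrived at or before $\tau$ and a first time $\tau'' \in [\tau,\tau']$ at which the actual value of $b_d$ differs from its simulated value. Equivalently, $d$ is frozen in one run and unfrozen in the other. The simulation assumes no new arrivals, and the dual evolves deterministically from the same starting state. Hence the two runs can only diverge for one of two reasons. Either the actual run placed a new order before $\tau''$, which serves and freezes demands (or clips them). Or some newly arrived demand $d'$ raised its $z$-variables and made a constraint tight that the simulation never saw.

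\textbf{Step 2: the first divergence cannot occur before $\tau_2$, except through a new-arrival constraint after $\tau$.} Up to $\tau_2$ no order is placed after $\tau$, so the only possible perturbation is from new demands. A new demand $d'$ arriving after $\tau$ has $H^{d'}_s=\infty$ for $s<$ its arrival time. Therefore $z^{d'}_s = 0$ for every $s \le \tau$, so $d'$ only contributes to constraints at times $s>\tau$. If the simulation is incorrect, then the divergence forces some old demand $d$ to freeze in the actual run because of such a constraint at some $s>\tau$. This uses Lemma~\ref{simulation_tight_only_past}: inside the simulation, constraints becoming tight before $\tau'$ have $s<\tau$. The alternative is that divergence first happens only at or after $\tau_2$, because the order at $\tau_2$ freezes or serves things.

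\textbf{Step 3: tie the divergence to the next order's interval.} Consider the first case, where an unserved active demand freezes in the actual run due to a constraint $S_s$ with $s>\tau$. Freezing an active demand triggers an order immediately, so this order is exactly the next order at $\tau_2$, and it has interval $(s,\tau_2]$ with $s>\tau$. We also need the case where the divergent demand $d$ had already been served at $\tau$, making it semi-active or clipped. In that case I expect clipping to make its actual trajectory agree with the simulation, so that case produces no divergence. Now take any previous order at time $\tau_0 \le \tau$ with interval $(s_0,\tau_0]$. That interval lies within $(-\infty,\tau]$ and is disjoint from $(s,\tau_2]$, so the order at $\tau_2$ is phase-initiating. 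In the second case, where divergence is caused by the order at $\tau_2$ itself, the order at $\tau_2$ is placed before any divergence. Here I must argue that it occurs at or after $\tau'$, meaning the simulation already ended. Otherwise, the demand triggering it would have frozen identically in simulation and been served at $\tau$, which is the argument used in Lemma~\ref{NPIO_general_bounded}. So the simulation window saw no divergence and is correct.

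\textbf{Main obstacle.} The main difficulty is the careful handling of demands served at $\tau$ but still unfrozen (semi-active), together with the clipping rule. We must show that their actual dual trajectories coincide with the simulated ones. In particular, clipping at the simulated freeze value exactly reproduces the simulated freeze, and being served at $\tau$ does not remove their contribution from constraints in a way that changes which other old demands freeze. I would handle this by verifying that, in both runs, every old demand's $b$ follows $\min(H^d_\cdot,\text{freeze value})$ with the same freeze events. That holds as long as the set of tight constraints with $s\le\tau$ evolves identically, which is true because only old demands contribute to those constraints.
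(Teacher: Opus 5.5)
Your skeleton matches the paper's: take the first divergence time, note that constraints with $s\le\tau$ only see old demands, and use Lemma~\ref{simulation_tight_only_past} to rule out divergences on the simulation side. However, Step~3 has a genuine gap, and it is exactly the case you set aside.

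You get the phase-initiating order from the divergent old demand $d$ freezing, but that places an order only if $d$ is still in $A$. The demand $d$ may instead have been served at $\tau$, either in the premature-service step (so it is semi-active) or because it froze in the simulation. Such a demand still has a rising $b_d$ once mature, and it can contribute to constraints $S_s$ with $s>\tau$ that it shares with newly arrived demands. It can therefore freeze through such a constraint in reality inside the simulation window. This is a real divergence, and since only freezes of active demands place orders, your argument produces no order at that moment.

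Clipping does not remove this case. It applies only to demands that froze in the simulation (and to mature demands served before freezing). For those it only guarantees that the real $b_d$ never exceeds the simulated value, which is Lemma~\ref{incorrect_is_ok}, and that lemma explicitly leaves open an \emph{earlier} real freeze through a constraint with $s>\tau$. Demands served prematurely that did not freeze in the simulation are not clipped at all. For the same reason, the check in your last paragraph is insufficient: it controls only constraints with $s\le\tau$, whereas this divergence goes through a constraint with $s>\tau$.

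The missing idea is the paper's counting argument, which does not depend on whether $d$ is active. Suppose some $S_{s'}$ with $s'>\tau$ is tight in reality at the first divergence time $\bar\tau$. By Lemma~\ref{ordering_after_s}, the duals contributing to it rose by at least $K_0$ after $\tau$. Up to $\bar\tau$, though, the old demands move exactly as in the simulation, whose total increase is still below $K_0$, since otherwise the simulation would have stopped. So some demand $d'$ that arrived after $\tau$ must contribute to $S_{s'}$. It is unserved and active, because no order is placed in $(\tau,\bar\tau)$.

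The paper attributes the phase-initiating order in $(\tau,\bar\tau]$ to this $d'$, not to $d$. You already observed in Step~1 that the perturbation comes from such a $d'$; the fix is to extract the order from it. The paper states this final inference in one line. If you adopt it, note that a contributor to a tight constraint freezes only when its own $b$ next tries to increase, so you must check that $d'$ actually does so before $\tau_2$.
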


\begin{proof}
If an order is placed at time $\tau$ and the next order (placed at $\tau_2$) is non-phase initiating, then that means the interval associated with the next order has left endpoint before $\tau$. This implies that at time $\tau$, all dual variables contributing to this constraint were already known. Suppose for contradiction that the simulation is incorrect. Let $\overline{\tau} \in (\tau,\tau_2)$ be the first time at which some demand $d$ was frozen in exactly one of the simulation at $\tau$ and reality\footnote{Notice that any demand that arrived after $\tau$ cannot have frozen yet, otherwise a PIO would have been placed by time $\overline{\tau}$ contradicting the assumption that the next order is at $\tau_2\geq \tau$ }.  This implies that all the demands that froze (in both simulation and reality) on the interval $(\tau,\tau_2)$ were already served. Therefore, at time $\overline{\tau}$, all dual variables corresponding to demands that had arrived by time $\tau$ take exactly the same value in both reality and the simulation at time $\tau$. 

This implies that all constraints $S_s$ corresponding to timesteps $s<\tau$ are tight in reality if and only if they were tight at time $\overline{\tau}$ in the simulation initiated at time $\tau$. Therefore, if the dual variable corresponding to $d$ became part of a tight constraint in exactly one of reality and simulation, it must be tight due to a constraint associated with a time $s'>\tau$. However, by  \cref{no_early_termination_simulation}, no such constraint could have frozen in simulation.

If such a constraint became tight in reality, then that would imply that the dual variables contributing to this constraint have increased by at least $K_0$ after time $s'>\tau$. Since the simulation ends by the time the dual variables which arrived before $\tau$ have increased by at most $K_0$, there is a demand $d'$ that had not yet arrived by time $\tau$ which is contributing to this tight constraint. However, that would mean that $d'$ would cause a phase-initiating order somewhere on the interval $(\tau,\overline{\tau}]$ contradicting the assumption that the next order is at time $\tau_2\geq \overline{\tau} >\tau$.
\end{proof}

\begin{lemma}
For any order at a time $\tau_2$, if $\tau_1$ is the time at which the order preceding $\tau_2$ was placed (or $\tau_1 = 1$, if $\tau_2$ is the first replenishment order) then $\Delta_{(\tau_1,\tau_2]} \geq K_0$.
\label{all_general_bounded_online}
\end{lemma}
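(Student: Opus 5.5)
The plan is to split into cases according to whether the order at $\tau_2$ is phase-initiating, and, if not, whether the simulation at $\tau_1$ was correct. This is the same split used in \cref{NPIO_general_bounded}, which already handles the case of a correct simulation at $\tau_1$; the new work is removing that hypothesis.

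\textbf{Case 1: the order at $\tau_2$ is phase-initiating} (this includes the first order, where $\tau_1=1$). Let $(s,\tau_2]$ be its interval. Since its interval is disjoint from the interval of the order at $\tau_1$, whose right endpoint is $\tau_1$, we get $s \ge \tau_1$. For the first order this is trivial, since $s\geq 1$. Then \cref{pio_cost_interval} gives
\[
\Delta_{(\tau_1,\tau_2]} \ge \Delta_{(s,\tau_2]} \ge K_0 .
\]
Here I use that the dual variables never decrease (fact (i)), so enlarging the interval can only increase $\Delta$.

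\textbf{Case 2: the order at $\tau_2$ is not phase-initiating.} By \cref{pio_incorrect_simulation}, the simulation initiated at $\tau_1$ is correct. Then \cref{NPIO_general_bounded}, applied with $\tau=\tau_1$, yields $\Delta_{(\tau_1,\tau_2]}\ge K_0$ directly. Internally, that lemma appeals to \cref{no_early_termination_simulation}: since the next order is non-phase-initiating, the simulation did not end because all duals froze. It must therefore have ended at some $\tau_{sim}$ with simulated dual increase $K_0$. Correctness transfers this simulated increase to the real increase, and $\tau_{sim}\le\tau_2$ because otherwise the demand triggering $\tau_2$ would already have been served at $\tau_1$.

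The main obstacle is the interface between the two cases, specifically the endpoint argument in Case 1. I need to confirm that ``disjoint from the intervals of all previous orders'' indeed forces $s\ge\tau_1$. This requires that the interval of the order at $\tau_1$ reaches its right endpoint $\tau_1$, which holds by the definition $(s',\tau_1]$. I also need to handle the edge case where orders coincide in time. In Case 2, the point to check is that \cref{pio_incorrect_simulation} is stated exactly with ``next order'', matching consecutive orders $\tau_1,\tau_2$, so no additional argument is needed.
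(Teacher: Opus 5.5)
Your proposal is correct and follows the paper's proof: split on whether the order at $\tau_2$ is phase-initiating, use \cref{pio_cost_interval} in that case and \cref{NPIO_general_bounded} otherwise. You also fill in two steps the paper's one-line proof leaves implicit: you invoke \cref{pio_incorrect_simulation} to supply the correctness hypothesis that \cref{NPIO_general_bounded} requires, and you use interval disjointness to get $s \ge \tau_1$.
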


\begin{proof}
If $\tau_2$ is phase-initiating, then this is a direct consequence of \cref{pio_cost_interval}. If $\tau_2$ is non-phase initiating, then this is a direct consequence of \cref{NPIO_general_bounded}.
\end{proof}

\begin{lemma}
Any non-item-phase initiating item order at some time $\tau_2$ of item type $i$ must be triggered by a demand $d$ which was serviceable at the previous item order $\tau$ of item type $i$. Moreover, for any demand $d_2$ of item type $i$ served prematurely at time $\tau$, when the wavefront value is $\tau_2$, $b_{d_2} \geq H^{d_2}_\tau$.
\label{add_on_is_good}
\end{lemma}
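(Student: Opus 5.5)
We argue directly from the definition of item-phase-initiating orders, mirroring the proof of \cref{ordering_costs_single_item_3}. Let $(s,\tau_2]$ be the interval of the item order of type $i$ at $\tau_2$, triggered by the constraints $S_s$ (possibly becoming tight in simulation). Let $\tau$ be the time of the previous item order of type $i$, with interval $(s_1,\tau]$.

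\textbf{First claim.} Because the order at $\tau_2$ is not item-phase-initiating, $(s,\tau_2]$ meets some earlier interval of an item order of type $i$. Every such earlier interval ends no later than $\tau$, so $s<\tau$. The demand $d$ whose dual variable became part of the tight constraint $S_s$ has $z^d_{is}>0$ or $z^d_s>0$. This forces $b_d>H^d_s$, so $d$ had arrived by time $s<\tau$, since $H^d_{s'}=\infty$ before its arrival. At time $\tau$, $d$ was unserved: it is served only at $\tau_2$, which is the next order of type $i$. Hence $d$ was a demand of item $i$ available to the order at $\tau$, i.e.\ serviceable there.

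\textbf{Second claim.} Fix a demand $d_2=(i,t_2)$ served prematurely at $\tau$, so $t_2>\tau$ and $b_{d_2}$ was still unfrozen at $\tau$. Before wavefront $\tau_2$, $d_2$ either
\begin{enumerate}
\item is still unfrozen at $\tau_2$, or
\item froze (semi-active, or through clipping) at some wavefront value in $(\tau,\tau_2]$.
\end{enumerate}
In case (1), the invariant gives $b_{d_2}=\mathbf{1}_{\tau_2>t_2}H^{d_2}_{\tau_2}$. It then suffices to show $\tau_2\ge g^{d_2}_\tau$, since monotonicity of the delay part gives $H^{d_2}_{\tau_2}\ge H^{d_2}_{g^{d_2}_\tau}\ge H^{d_2}_\tau$. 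In case (2), $d_2$ froze because a constraint $S_e$ became tight. If $e\le\tau$, then $d_2$ was contributing to it, so $b_{d_2}\ge H^{d_2}_e\ge H^{d_2}_\tau$, using that holding cost is non-increasing on $e\le\tau<t_2$. If $e>\tau$, \cref{ordering_after_s} shows that a constraint whose interval starts after $\tau$ became tight, which should force an item-phase-initiating order of item $i$ before $\tau_2$. That contradicts $\tau_2$ being the next item order, and the case is excluded.

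\textbf{Case (1) is the main obstacle.} We must show the triggering demand $d$ was ranked no later than $d_2$, i.e.\ $g^d_\tau\ge g^{d_2}_\tau$ in the appropriate sense. The argument: $d$ was not served at $\tau$ although it was serviceable, so it was either cut off by the $K_i$ threshold in the ranking or simply not reached. In either case $g^{d}_\tau\ge g^{d_2}_\tau$. If $d$ was mature at $\tau$, it would have been served at $\tau$, so $d$ was premature there. When $d$ triggers at $\tau_2$ through a constraint $s<\tau$, its dual satisfies $b_d\ge H^d_s\ge H^d_\tau$. Since $b_d=H^d_{\tau_2}$ and $d$ is unfrozen, we get $\tau_2\ge g^d_\tau\ge g^{d_2}_\tau$, which gives the bound for $d_2$.

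I expect the delicate points to be the following. The triggering constraint may become tight during the simulation at some intermediate order rather than in reality. Clipping may alter the $H$ functions after $\tau$. In both situations one must check that the comparison $b_d\ge H^d_\tau$ still transfers through the frozen values.
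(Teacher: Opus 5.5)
Your first claim and your case (1) are correct and match the paper; the paper compresses case (1) into ``by choice of the order in which we augmented the order, $b_{d_2}$ must have already frozen before $\tau_2$.'' Your subcase of (2) with $e\le\tau$ is also fine.

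\textbf{The gap is in case (2) with $e>\tau$.} You assert that a tight constraint $S_e$ with $e>\tau$ ``should force an item-phase-initiating order of item $i$ before $\tau_2$.'' Nothing in the algorithm does this. An order is placed only when an \emph{active} (unserved) demand is blocked. After $\tau$, $d_2$ is already served (semi-active), so its freezing places no order. Moreover, $S_e$ can become tight using only two sources:
\begin{itemize}
\item the semi-active type-$i$ demands served prematurely at $\tau$, whose $b$ values keep growing past their desired service times;
\item demands of other types, which fill the $K_0$ part.
\end{itemize}
In that situation no order is triggered at all. Lemma~\ref{ordering_after_s} only counts dual increase and says nothing about orders being placed. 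So this subcase is not excluded by your argument.

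\textbf{The missing idea is to use the triggering demand $d$ itself.} This is what the paper's ``consider $S\cup\{d\}$'' step does.
\begin{itemize}
\item Constraints never loosen, so $S_e$ stays tight from the moment $d_2$ froze until $\tau_2$.
\item When $d$ tries to increase at $\tau_2$, it is active, so $b_d=H^d_{\tau_2}$.
\item We have $H^d_{\tau_2}\ge H^d_e$. If $e\le t_d$, then $H^d_e\le H^d_\tau\le b_d$, by monotonicity of holding cost and your first claim. If $e>t_d$, then $H^d_e\le H^d_{\tau_2}$, by monotonicity of delay cost since $e<\tau_2$.
\item Hence $d$ is also blocked by $S_e$, and one of two things happens. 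Either $d$ froze strictly before $\tau_2$, which contradicts that it triggers the order at $\tau_2$. Or $d$ freezes at $\tau_2$ with $S_e$ among its blocking constraints. Since the latest blocking constraint is the one recorded, the order's interval is then $(e',\tau_2]$ with $e'\ge e>\tau$. This interval is disjoint from all earlier item-$i$ intervals, so the order would be item-phase-initiating, a contradiction.
\end{itemize}
With this replacement your case split goes through.
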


\begin{proof}
By definition of non-item phase initiating item orders, the interval $(s, \tau_2]$ associated with the constraint $S_s$ that induced the item order must contain $\tau$, otherwise it would be item-phase initiating. Therefore, $s<\tau$ and $H^d_{\tau} \leq b_d$.

Now suppose that for some demand $d_2$ of item type $i$ which was served for a holding cost, $b_{d_2} < H^{d_2}_{\tau}$. Recall that by choice of the order in which we augmented the order with demands, $b_{d_2}$ must have already frozen before $\tau_2$. That means that there is some constraint $S_{s'}$ associated with a set $S$ of demands and a timestep $s' \in (\tau,\tau_2)$ that caused $b_{d_2}$ to freeze. Moreover, $d \notin S$ otherwise $d$ would have frozen before $\tau_2$ However, if we consider $S \cup \{ d \}$, clearly the RHS has only increased by $H^d_{s'}<H^d_{\tau} \leq H^d_{\tau_2}$ implying that $d$ must have frozen at $\tau_2$ contradicting the assumption that it induced an item order.
\end{proof}

\begin{lemma}
For any order of an item type $i$ at some time $\tau_2$, $\Delta^{D_i}_{(\tau_1,\tau_2],sim} \geq K_i$ where $\tau_1$ is the time at which the previous order of item $i$ took place, or if no such previous time exists, $\tau_1=1$.
\label{jrp_item_ordering_online_simple}
\end{lemma}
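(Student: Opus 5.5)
The plan is a case split on whether the item order of type $i$ at $\tau_2$ is item-phase-initiating. In each case I would exhibit a dual increase of at least $K_i$ among demands in $D_i$. That increase must occur either in real time on $(\tau_1,\tau_2]$ or inside the simulation run at $\tau_2$. The simulation term is needed because a non-regular item order at $\tau_2$ may be justified by a tight item constraint that only becomes tight during the simulation at $\tau_2$.

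\textbf{Case 1: the item order is item-phase-initiating.} Let $(s,\tau']$ be its interval. Here $S_s$ is the constraint containing $\sum_{(i,t)\in D_i} z^{it}_{is}\le K_i$ that became tight, at a real or simulated wavefront $\tau'\ge\tau_2$. By fact (iii), every $z^{it}_{is}$ is $0$ when the wavefront is at $s$. By fact (vi), each unit of increase of $z^{it}_{is}$ comes with an equal increase of $b_{it}$. This is the argument of \cref{ordering_after_s} restricted to the item-$i$ constraint. So $\sum_{d\in D_i} b_d$ grows by at least $K_i$ between wavefront $s$ and the moment of tightness. That moment lies either in real time up to $\tau_2$ or in the simulation at $\tau_2$.

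Disjointness from the interval $(s_1,\tau_1]$ of the previous item-$i$ order gives $s\ge\tau_1$. Therefore this growth is contained in $\Delta^{D_i}_{(\tau_1,\tau_2],sim}$. One point needs care. The simulation at $\tau_2$ freezes variables only in simulation, while $b$ is defined by real time. I will adopt the convention, matching the definition of $\Delta^S_{s,sim}$, that the simulated increase is counted as part of that term.

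\textbf{Case 2: the item order is non-item-phase-initiating.} I apply \cref{add_on_is_good}. The triggering demand $d$ has $s<\tau_1$, so $d$ was serviceable at $\tau_1$ with $H^d_{\tau_1}\le b_d$ at the time it freezes. Since $d$ was not served at $\tau_1$, the premature-service step at $\tau_1$ must have stopped before reaching $d$. Here I use the clipping rule and the ranking by $g^d_{\tau_1}$: by the time $d$ freezes, $g_d\le$ the current wavefront. Every demand $d'$ served prematurely at $\tau_1$ has $g_{d'}\le g_d$, and by the second part of \cref{add_on_is_good} it has $b_{d'}\ge H^{d'}_{\tau_1}$ at wavefront $\tau_2$.

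If $d$ was excluded, this gives $\sum_{d'} H^{d'}_{\tau_1}+H^d_{\tau_1}>K_i$, exactly as in \cref{ordering_costs_single_item_3}. All these demands were premature at $\tau_1$, so their $b$ values were $0$ at $\tau_1$. Their current values therefore represent real (or simulated, for $d$) increase in $(\tau_1,\tau_2]$, and this totals more than $K_i$.

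The alternative is that the premature step never considered $d$ because $d$ was mature or was not among the item types chosen. If $d$ was mature it would have been served at $\tau_1$, since every mature demand of an ordered type is served. So $d$ was premature, and the same count applies.

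\textbf{Main obstacle.} The hard step is Case 2 when the simulation at $\tau_1$ was incorrect, or when the item order at $\tau_1$ was itself only a simulated add-on. In that situation some $b_{d'}$ may have frozen before reaching $H^{d'}_{\tau_1}$. I would handle this by invoking \cref{pio_incorrect_simulation}: an incorrect simulation is followed by a phase-initiating order, and I expect the item order at $\tau_2$ to then fall into Case 1. Otherwise I will check that the contradiction argument in \cref{add_on_is_good} goes through regardless of correctness. That argument uses only the ranking by $g$ and the fact that $d$ has not frozen before $\tau_2$.
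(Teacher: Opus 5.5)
Your proposal is correct and is essentially the paper's proof. It uses the same case split: in the item-phase-initiating case, the tight item constraint has $s\ge\tau_1$ and every increase of $z^{it}_{is}$ is matched by an equal increase of $b_{it}$, possibly inside the simulation at $\tau_2$; in the non-item-phase-initiating case, you combine Lemma~\ref{add_on_is_good} with the stopping rule of the premature-service step, and all the $b$'s involved are $0$ at $\tau_1$.

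Your ``main obstacle'' is not actually needed, because the argument of Lemma~\ref{add_on_is_good} never uses correctness of the simulation at $\tau_1$, so your second resolution is the right one. Your first suggested fix would not work in any case: Lemma~\ref{pio_incorrect_simulation} only says the next \emph{general} order is phase-initiating, and that says nothing about whether the next item-$i$ order is item-phase-initiating.
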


\begin{proof}
If the item order of item type $i$ at $\tau_2$ is item-phase initiating, and triggered by a constraint at some time $s \geq \tau_1$, then the total increase in $\sum_{(i,t) \in D_i} z^{it}_{is}$ from $\tau_1$ till the moment the order was triggered (possible during the simulation initiated at time $\tau_2$ is at least $K_i$.

If the item order of item $i$ at $\tau_2$ is non-item-phase-initiating, then by \cref{add_on_is_good}, when the order of item type $i$ at $\tau_2$ is triggered (possibly in simulation), for every demand $(i,t)$ served prematurely at $\tau_1$, $b_{it} \geq H^{it}_{\tau_1}$. Therefore, for all $(i,t) \in F$, $\sum_{(i,t) \in F} b_{it}+b_d \geq H^d_{\tau_1}+\sum_{(i,t) \in F} H^{it}_{\tau_1}>K_i$. Moreover, we know that at time $\tau_1$, because all demands in $F \cup \{d\}$ had desired service times after $\tau_1$, all their $b$ variables were $0$. Therefore, the increase in their sum is at least $K_i$ as desired.
\end{proof}

\begin{lemma}
The total non-regular item ordering cost is at most $2B$.
\label{jrp_item_ordering_cost_factor_2_online}
\end{lemma}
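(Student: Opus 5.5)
We charge each non-regular item order of type $i$ to dual increase of demands in $D_i$, using \cref{jrp_item_ordering_online_simple}. That lemma says that for consecutive item orders of type $i$ at $\tau_1<\tau_2$, we have $\Delta^{D_i}_{(\tau_1,\tau_2],sim}\ge K_i$. This quantity splits into the real increase $\Delta^{D_i}_{(\tau_1,\tau_2]}$ on the window between consecutive orders of type $i$, plus the simulated increase $\Delta^{D_i}_{\tau_2,sim}$ of the simulation run at $\tau_2$.

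The real windows $(\tau_1,\tau_2]$ for a fixed $i$ are disjoint across consecutive item orders. Summing over all item orders of type $i$ and then over $i$ therefore bounds the real parts in total by $\sum_i \Delta^{D_i}_{\text{total}} \le B$.

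The remaining task is to bound the simulated parts $\sum_i\sum_{\tau_2}\Delta^{D_i}_{\tau_2,sim}$ by $B$ as well. Two things make this work.
\begin{itemize}
\item Each simulation at $\tau_2$ stops once the total simulated dual increase reaches $K_0$, or earlier if everything freezes. So its contribution summed over all item types is at most the total simulated increase in that simulation.
\item If the simulation is correct, this simulated increase coincides with real dual increase on $(\tau_2,\tau_{sim}]$. Here $\tau_{sim}\le\tau_3$, where $\tau_3$ is the next general order, as argued in \cref{NPIO_general_bounded}.
\end{itemize}
The intervals $(\tau_2,\tau_{sim}]$ for successive general orders are disjoint. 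Hence the simulated increases of correct simulations sum to at most $B$, charged to real dual growth on disjoint windows between general orders. Together with the real parts this gives $2B$.

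The main obstacle is incorrect simulations, where simulated increase need not equal real increase. Here I would use \cref{pio_incorrect_simulation}: if the simulation at $\tau_2$ is incorrect, the next order at $\tau_3$ is phase-initiating. I would also use that every item order placed at $\tau_2$ is non-regular, with simulated dual increase at most $K_0$ in total. I would charge that simulated increase not to the window after $\tau_2$ but to the slack in the phase-initiating order's interval. By \cref{pio_cost_interval}, that interval carries real increase at least $K_0+\sum_{i\in S}K_i$, while only its regular item orders and its general order consume it.

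The first thing to try is the following. Simulated increase in an incorrect simulation only involves demands that had arrived by $\tau_2$. Up to the first point of divergence it agrees with reality, as in the proof of \cref{pio_incorrect_simulation}. Beyond divergence, I would argue that the simulated increase is bounded by the real increase of those same demands before the next phase-initiating trigger, because freezing in reality only happens earlier than in simulation, due to the extra constraints created by new arrivals.

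If this monotonicity comparison fails, I would fall back on bounding each incorrect simulation's contribution by $K_0$. I would then charge that to the $K_0$ of real increase which \cref{all_general_bounded_online} guarantees on $(\tau_2,\tau_3]$. These windows are disjoint across consecutive general orders, so the simulated parts still total at most $B$.
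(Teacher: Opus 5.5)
Your split via \cref{jrp_item_ordering_online_simple} matches the paper. So do your bound of the real parts by $B$ and your observation that each simulation contributes at most $K_0$ summed over item types. The gap is in where you send that $K_0$. You charge it to real dual growth \emph{after} the order at $\tau_2$, and as written this does not close, for three reasons.

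(i) For the last order there is no $\tau_3$. The contrapositive of \cref{pio_incorrect_simulation} only says that the next order, \emph{if one exists}, is phase-initiating. So the last simulation may be incorrect, and then neither your main route nor your fallback covers it.

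(ii) For correct simulations you use $\tau_{sim}\le\tau_3$, but \cref{NPIO_general_bounded} establishes this only when the order at $\tau_3$ is non-phase-initiating. Nothing rules out a correct simulation at $\tau_2$ running past a phase-initiating order at $\tau_3$ that was triggered by a demand arriving after $\tau_2$, since the simulation cannot see that demand. So your windows $(\tau_2,\tau_{sim}]$ need not be disjoint.

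(iii) Your monotonicity comparison points the wrong way. By \cref{incorrect_is_ok}, the real $b_d$ is at most its simulated value, so earlier freezing in reality means \emph{less} real growth, which cannot pay for the simulated growth.

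The missing idea is to charge backwards rather than forwards, which is the paper's proof. For every order $\tau$, whether or not its simulation is correct, apply \cref{all_general_bounded_online} to $\tau$ itself to get $\Delta^{D}_{\tau,sim}\le K_0\le \Delta_{(\tau_{prev},\tau]}$, with $\tau_{prev}=1$ for the first order. These windows exist for every order, including the last, and are pairwise disjoint. So the simulated parts total at most $B$, and no correctness analysis or appeal to \cref{pio_incorrect_simulation} is needed. Your fallback is exactly this argument shifted forward by one order, and that shift is what loses the last order.
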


\begin{proof}
By \cref{jrp_item_ordering_online_simple}, we know that the cost of any non-regular item order of item type $i$ at some time $\tau$ is at most $\Delta^{D_i}_{(\tau_p,\tau]}+\Delta^{D_i}_{\tau_2,sim}$ where $\tau_p$ is when the item was previously ordered before $\tau$ (or time $1$ if there was no previous order). Clearly if $O^N_i$ is the set of non-regular demands of item type $i$, $\sum_{i \in [N]}\sum_{(i,\tau) \in O^N_i} \Delta^{D_i}_{(\tau_p,\tau]} \leq \sum_{i \in [N]} \sum_{d \in D_i} b_d = B$. Moreover, because the total increase of $b$ variables in simulation is at most $K_0$, $\sum_{i \in [N]}\sum_{(i,\tau) \in O^N_i} \Delta^{D_i}_{\tau,sim} \leq \sum_{\tau \in O} K_0 \leq \sum_{\tau \in O} \Delta^D_{\tau_{prev},\tau}$ where $\tau_{prev}$ is the replenishment order preceding $\tau$ (or $\tau_{prev}=1$ if there is no preceding order). The last inequality is by \cref{all_general_bounded_online}. However, by definition, $\sum_{\tau \in O} \Delta^D_{\tau_{prev},\tau}=\sum_{d \in D} b_d=B$.
\end{proof}

By combining \cref{NPIO_general_bounded} and \cref{pio_cost_interval} we obtain that if the dual solution has value $B$, the general ordering cost is at most $B$. By \cref{jrp_item_ordering_cost_factor_2_online}, we know that the item ordering cost is at most $2B$. It is also clear that every demand $d$ had delay cost at most $b_d$. Since demands can only be served for a holding cost because they froze in the simulation or during the identification of which requests to serve prematurely, we can bound the total holding cost by the total general+item ordering cost $3B$ giving us:

\begin{lemma}
{\sf{ALGJRP-{simple}}} is at most $7-$competitive.
\label{online_simpler_competitive}
\end{lemma}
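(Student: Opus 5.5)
The plan is to split the algorithm's cost into four parts: general ordering, item ordering (regular and non-regular), holding, and delay. Each part will be bounded by a multiple of the dual objective $B=\sum_{d\in D} b_d$. By weak LP duality, feasibility of the dual solution (fact (i)) gives $B\le \OPT$. The target is
\[
\text{general} \le B,\qquad \text{item} \le 2B,\qquad \text{holding} \le 3B,\qquad \text{delay} \le B,
\]
which sums to $7B\le 7\OPT$.

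\textbf{General ordering cost.} Index the orders $\tau^{(1)}<\tau^{(2)}<\dots$ and take $\tau^{(0)}=1$. By \cref{all_general_bounded_online}, each order at $\tau^{(k)}$ satisfies $\Delta_{(\tau^{(k-1)},\tau^{(k)}]}\ge K_0$. These intervals are disjoint and dual variables never decrease (fact (i)), so summing over $k$ gives general ordering cost at most $B$.

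\textbf{Item ordering cost.} For non-regular item orders, \cref{jrp_item_ordering_cost_factor_2_online} already gives a bound of $2B$. Regular item orders occur only in phase-initiating orders, and \cref{pio_cost_interval} charges $K_0+\sum_{i\in S}K_i$ to the increase over the interval $(s,\tau]$. The step I expect to be delicate is making sure this charge does not double-count the $K_0$ already charged to general ordering. For a phase-initiating order, I would charge the general ordering cost and the regular item costs jointly to $\Delta_{(s,\tau]}$. The intervals of phase-initiating orders are disjoint from all earlier intervals, so these charges are to disjoint portions of the dual growth. For non-phase-initiating orders, the general cost is charged via \cref{NPIO_general_bounded}. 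The risk is that this gives roughly $B$ for general plus regular item cost together, with a possible extra overlap. If the overlap cannot be avoided, I would fall back on the observation that the regular item orders in an order at $\tau$ are charged to $D_i$-increases, which \cref{jrp_item_ordering_online_simple} also covers. Either way, the total ordering cost should come out to at most $3B$, matching the bound claimed in the text.

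\textbf{Holding cost.} Every demand served early was served either because it froze during a simulation or in the premature service step. In the premature step, the holding cost added per item order of type $i$ is at most $K_i$ by the threshold rule. For demands frozen in simulation, the simulation stops once the dual increase reaches $K_0$, and such a demand's holding cost is at most its simulated $b_d$. So the simulated holding cost per order is at most $K_0$, which the general ordering cost pays for. Hence total holding cost is at most total ordering cost, i.e. at most $3B$.

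\textbf{Delay cost.} By fact (v) together with clipping, each demand's delay cost is at most $b_d$, so total delay cost is at most $B$. Summing the four bounds gives $7B\le 7\OPT$.
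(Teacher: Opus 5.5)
Your proposal is correct and uses the same accounting as the paper. General ordering is at most $B$ via \cref{all_general_bounded_online}, and item ordering is at most $2B$. Holding is at most the total ordering cost $3B$: add-ons are capped at $K_i$ per item order, and simulation-frozen demands are capped at $K_0$ per order, because each froze via a constraint at some $s\le\tau$ (\cref{simulation_tight_only_past}), so $H^d_\tau\le H^d_s\le b_d$. Delay is at most $B$.

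For the item cost, the paper takes your ``fallback'': \cref{jrp_item_ordering_online_simple} holds for every item order, so the summation in \cref{jrp_item_ordering_cost_factor_2_online} covers regular item orders too. Your primary route also works, and the overlap you worried about does not occur. It jointly charges $K_0+\sum_{i\in S}K_i$ of a phase-initiating order to $\Delta_{(s,\tau]}$, and its interval is disjoint from the previous order's interval, which forces $s\ge\tau_{prev}$. So all charges land in the disjoint windows $(\tau_{prev},\tau]$; this is how the paper later handles regular item orders in the $5$-competitive analysis.
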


\subsection{5-competitive analysis}
\label{sec:jrp-analysis}

The previous analysis was loose in several ways. For example, the delay cost is only really the total increase of $b$ variables while their corresponding demands are active. Moreover, the non-regular item ordering cost is really only $\sum_{(i,t) \in D-R} (b_{it}+b^{semi}_{it})+(P-1)K_0$ where $R$ is the set of regular demands, $P$ is the number of phases and $b^{semi}_{it}$ is the total increase in $b_{it}$ after $b_{it}$ became semi-active. These facts will allow us to ensure that the sum of delay costs and non-regular item ordering costs is at most $2B$ instead of $3B$. We will also be able to tighten the bound on holding costs from $3B$ to $2B$ bringing us down to the stated $5$-competitive algorithm.

We will refer to holding cost incurred due to demands freezing in simulation as non-add-on holding cost and holding cost incurred due to demands served during the prematurely serving additional requests step as add-on holding cost. We will also let $B^{active}$ and $B^{semi}$ be the total increase of $b$ variables while active (respectively semi-active). Similarly, we will let $\Delta^{S,semi}_{(\tau,\tau_2]}$ refer to the total increase in $b$ variables corresponding to demands in $S$, while semi-active, as the wavefront value increased from $\tau$ to $\tau_2$.

\textbf{Modification to the algorithm:} In the premature service step for item $i$, we will not always add demands onto the order in ascending order until the cost would exceed $K_i$. If item $i$ was ordered because of a demand freezing in simulation, we instead add demands onto the order until the cost would exceed $K_i-\alpha$ where $\alpha$ is the total increase of $b$ variables of demands of item $i$ that froze during the simulation. We will call this algorithm {\sf{ALGJRP-{final}}}.

To bound the item ordering cost earlier, we wanted to be able to guarantee that for a pair of consecutive orders of item $i$ at $\tau_1,\tau_2$, the sum of increase in $b$ variables corresponding to $i$ in the simulation at $\tau_2$ and between $\tau_1$ and $\tau_2$ was at least $K_i$. To do this, we used the fact that the premature service step served demands with a holding cost of $K_i$. However, if during the simulation, $b$ variables corresponding to demands of item $i$ that froze in simulation increased by $\alpha<K_i$, then it is sufficient to add on demands until the holding cost exceeds $K_i-\alpha$. Crucially, this will enable us to bound the add-on holding cost using the total increase in dual variables of item $i$ since the previous item order, enabling us to say that the add-on holding cost is at most $B$ instead of $2B$.

Now we state the algorithm {\sf{ALGJRP-{final}}} and a couple of subroutines it relies on more formally.

\begin{algorithm}[H]
\caption{{\sf{ALGJRP-{final}}}}
\begin{algorithmic}[1]

\State Set $\tau \gets 1$.
\State Set $b_{it} \gets 0$ for all demands $(i,t)$ that have arrived at time $1$.
\State Set $z^{it}_{is} \gets 0$ for all demands $(i,t)$ and timesteps $s$.
\State Set $z^{it}_{s} \gets 0$ for all demands $(i,t)$ and timesteps $s$.
\State Let $A$ be the set of demands that have arrived at time $1$.
\State Let $U \gets \emptyset$, $X \gets \emptyset$.

\While{$\tau < T$}

    \State Let $(i,t)$ be the demand such that $H^{it}_{\tau} \neq H^{it}_{\tau+1}$.

    \If{$\tau \ge t$ and $(i,t) \in A \cup X$}
        \State Increase $b_{it}$ gradually until $H^{it}_{\tau+1}$ is reached, or there exists a timestep $s$ such that:
        \[
        b_{it} \ge H^{it}_s, \quad
        \sum_{(i,t) \in D_i} z^{it}_{is} = K_i, \quad
        \sum_{(i,t) \in D} z^{it}_s = K_0.
        \]

        As $b_{it}$ increases, for all timesteps $s$ such that $b_{it} \ge H^{it}_s$ and $\sum_{(i,t)\in D_i} z^{it}_{is} < K_i$, simultaneously increase $z^{it}_{is}$ at the same rate as $b_{it}$. Moreover, for timesteps $s$ such that $b_{it} \ge H^{it}_s$ and $\sum_{(i,t)\in D_i} z^{it}_{is} = K_i$, simultaneously increase $z^{it}_s$ at the same rate as $b_{it}$

    \EndIf

    \If{$b_{it} < H^{it}_{\tau+1}$ and $(i,t) \in A$} :
        \State Place an order at time $\tau$.

        \State Let $S_\tau$ be the set of item types $i'$ such that:
        \[
            \sum_{(i',t') \in D_{i'}} z^{i't'}_{i's} = K_{i'}, \qquad
            \exists (i',t'') \in A: \ b_{i't''} \ge H^{i't''}_s.
        \]

        \State Set $(\boldsymbol{\alpha}, S^{\mathrm{sim}}_\tau, D^{sim}_{\tau},H)
        \gets \mathrm{Simulation}(b,z,D,\tau+1,A,U,X,H)$.

        \ForAll{$i' \in S_\tau$}
            \State Prematurely serve the set of demands $\mathrm{Premature\_service}(b,z,D,\tau,i',K_{i'})$.
            \State Move those demands from $A$ to $X$.
        \EndFor

        \ForAll{$i' \in S_\tau \setminus S^{\mathrm{sim}}_\tau$}
            \State Prematurely serve the demands $\mathrm{Premature\_service}(b,z,D,\tau,i',K_{i'}-\boldsymbol{\alpha}_{i'})$.
            \State Move those demands from $A$ to $X$.
        \EndFor

        \State $A \gets A \setminus \{(i,t)\}$.
        \State $X \gets X \setminus \{(i,t)\}$.
        \State $U \gets U \cup \{(i,t)\}$.
    \EndIf
    \State $\tau \gets \tau+1$
    \State Add all newly arrived demands to $A$.
\EndWhile

\end{algorithmic}
\end{algorithm}

\begin{algorithm}[H]
\caption{Simulation(b,z,D,$\tau_{start}$,A,\textit{U},X,H)}
\begin{algorithmic}[1]
\State Set $\tau_{sim} \gets \tau_{start}$ 
\State Set $\Delta = 0$
\State Set $\alpha_i \gets 0$ for all item types $i$
\State Set $S^{sim}_{\tau} \gets \emptyset$ and $D^{sim}_{\tau} \gets \emptyset$
\While{$\Delta<K_0$ and $A \cup X \neq \emptyset$}

    \State Let $(i,t)$ be the demand such that $H^{it}_{\tau_{sim}} \neq H^{it}_{\tau_{sim}+1}$.

    \If{$\tau_{sim} \ge t$ and $(i,t) \in A \cup X$}
        \State Increase $b_{it}$ gradually until $H^{it}_{\tau_{sim}+1}$ is reached, or there exists a timestep $s$ such that:
        \[
        b_{it} \ge H^{it}_s, \quad
        \sum_{(i,t) \in D_i} z^{it}_{is} = K_i, \quad
        \sum_{(i,t) \in D} z^{it}_s = K_0.
        \]

        As $b_{it}$ increases, for all timesteps $s$ such that $b_{it} \ge H^{it}_s$ and $\sum_{(i,t)\in D_i} z^{it}_{is} < K_i$, simultaneously increase $z^{it}_{is}$ at the same rate as $b_{it}$. Moreover, for timesteps $s$ such that $b_{it} \ge H^{it}_s$ and $\sum_{(i,t)\in D_i} z^{it}_{is} = K_i$, simultaneously increase $z^{it}_s$ at the same rate as $b_{it}$
    \EndIf
    \State $\Delta \gets \Delta + b_{it}-H^{it}_{\tau_{sim}}$ 
    \State $\alpha_i \gets \alpha_i + b_{it}-H^{it}_{\tau_{sim}}$
    \If{$b_{it} < H^{it}_{\tau+1}$ and $(i,t) \in A \cup X$}:
        \State $A \gets A \setminus \{(i,t)\}$.
        \State $X \gets X \setminus \{(i,t)\}$.
        \State $U \gets U \cup \{(i,t)\}$.
        \State For all $s > \tau_{sim}$, $H^{it}_s \gets H^{it}_{\tau_{sim}}$ \If{$(i,t) \in A$}:
            \State $S^{sim}_{\tau} \gets S^{sim}_{\tau} \cup \{i\}$
            \State $D^{sim}_{\tau} \gets D^{sim}_{\tau} \cup \{(i,t)\}$

        \EndIf
    \EndIf

\EndWhile
\State \Return $(\alpha,S^{sim}_{\tau},D^{sim}_{\tau},H)$

\end{algorithmic}
\end{algorithm}

\begin{algorithm}[H]
\caption{$\mathrm{Premature\_service}(b,z,D,\tau,i', threshold$)}
\begin{algorithmic}[1]
\State For each demand $(i',t')$ such that $t' > \tau$, let $g_{i't'}$ be the earliest time $s>t'$ such that $H^{i't'}_s \geq H^{i't'}_\tau$.
\State Sort the demands of item type $i'$ with desired service time $t'>\tau$ in ascending order of their corresponding $g_{i't'}$. Let $d_k$ be the $k$th demand in this sequence. Let $L$ be the length of this sequence
\State $\beta \gets 0, D_{premature} \gets \emptyset$
\State $k \gets 1$
\While{$k \le L$}
    \If{$\beta + H^{d_k}_{\tau} > \text{threshold}$}
        \State \Return $D_{premature}$
    \Else
        \State $D_{premature} \gets D_{premature} \cup \{d_k\}$
        \State $\beta \gets \beta + H^{d_k}_{\tau}$
        \State $k \gets k + 1$
    \EndIf
\EndWhile
\State \Return $D_{premature}$
\end{algorithmic}
\end{algorithm}

\textbf{The modification and results from the $7$-competitive analysis:} Most of the results from the previous section did not use the fact that the threshold was $K_i$. The only exceptions are \cref{jrp_item_ordering_online_simple} and the two lemmas that were derived using it, \cref{jrp_item_ordering_cost_factor_2_online} and \cref{online_simpler_competitive}.  

Now, we begin by proving a useful consequence of the clipping step of the algorithm:

\begin{lemma}
If a demand $d$ freezes at time $\tau_{sf}$ in a simulation at time $\tau$, but then actually freezes at a time $\tau_{f}$, then either $\tau_f \geq \tau_{sf}$ or the demand froze due to a constraint $S_s$ with $s>\tau$.
\label{incorrect_is_ok}
\end{lemma}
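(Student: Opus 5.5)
We argue by contradiction: suppose $\tau_f < \tau_{sf}$ and that, in reality, $d$ froze because of a constraint $S_s$ with $s \le \tau$. We will show that the simulation started at $\tau$ must also have frozen $d$ by time $\tau_f$, contradicting $\tau_{sf} > \tau_f$. The comparison is between the real dual trajectory on $(\tau,\tau_f]$ and the simulated trajectory on the same window. Both start from the same dual solution at wavefront $\tau$. Both evolve only the demands that had arrived by $\tau$, together with, in reality, demands arriving after $\tau$.

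\textbf{Step 1: only pre-$\tau$ demands matter for the constraint.} At wavefront $\tau$, fact (iii) forces any constraint $S_s$ with $s\le\tau$ to involve only $z$ variables of demands with $H^{d'}_s<\infty$. Since $H^{d'}_s=\infty$ for any demand $d'$ arriving after $s$, such a demand cannot contribute to $S_s$. Hence the load on $S_s$ at real time $\tau_f$ is determined entirely by the $b$ values of demands known at time $\tau$.

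\textbf{Step 2: monotone domination of simulated values.} We claim that for every demand $d'$ known at time $\tau$ and every wavefront $\tau''\in(\tau,\tau_f]$, the simulated $b_{d'}$ is at least the real $b_{d'}$, as long as $d'$ has not yet frozen in the simulation. In the simulation, an unfrozen known demand's $b$ tracks $H^{d'}$ exactly as in reality. In reality, however, $d'$ may freeze earlier, be served and become semi-active, or have its delay cost clipped. Each of these can only stop growth, never add to it. Clipping deserves particular care. A demand frozen in an earlier simulation has $H$ clipped at its simulated freeze value. A mature demand served without freezing is clipped at its service-time cost. In both cases the real $b$ thereafter stays at most the value the simulation at $\tau$ sees, because the simulation at $\tau$ runs on the already-clipped $H$ passed into it. This domination of the $z$ loads on old constraints is the crux of the argument, and we expect it to be the main obstacle. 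To establish it, we would induct over the discrete events (each step changes $H$ for exactly one demand), showing that the simulated load $\sum z$ on each constraint $S_{s'}$ with $s'\le \tau$ is at least the real load at every time.

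\textbf{Step 3: conclude.} At real time $\tau_f$, the constraint $S_s$ with $s\le\tau$ is tight and $b_d\ge H^d_s$. By Step 2, the simulated loads on $S_s$ are at least as large, so $S_s$ is also tight in the simulation by time $\tau_f$. We also need the simulation to still be running at time $\tau_f$. By \cref{simulation_tight_only_past}, only old constraints can tighten before the simulation ends, so this is consistent. If instead the simulation had already ended before $\tau_f$, then $d$ would not have frozen in the simulation at all, contradicting the hypothesis that it froze at $\tau_{sf}$. Finally, the simulated $b_d$ equals the real $b_d$ until $d$ freezes in the simulation, because $d$ is unfrozen in both and follows the same $H^d$. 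Therefore $d$ meets the tight constraint $S_s$ in the simulation no later than $\tau_f$, giving $\tau_{sf}\le\tau_f$, which is the desired contradiction.
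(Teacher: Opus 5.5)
\textbf{There is a gap at the step you yourself call the crux.} Your overall plan matches the paper's. You show that the real dual values of demands known at $\tau$ are dominated by their simulated values on the relevant window. You deduce that the loads on every constraint $S_s$ with $s\le\tau$ are dominated. You conclude that $d$ would have frozen in the simulation by $\tau_f$. Your Step 1 correctly excludes demands arriving after $\tau$ from old constraints, because $H^{d'}_s=\infty$ for them; that, not fact (iii), is the reason.

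The problem is that Step 2 is left unproved, and the version you state is too weak for Step 3. You claim domination only ``as long as $d'$ has not yet frozen in the simulation.'' But the simulated load on $S_s$ at $\tau_f$ includes the frozen $z$-values of demands $d'$ that froze in the simulation at some $y<\tau_f$. You need their \emph{real} $z$-values at $\tau_f$ to be no larger, and nothing in your argument gives this. Having real loads at most simulated loads at $y$ does not make the constraint that froze $d'$ in the simulation tight in reality. So $d'$ may stay unfrozen in reality after $y$. If $H^{d'}$ kept rising, its real $b_{d'}$ would overtake its simulated frozen value and push real loads on old constraints above the simulated ones. Your proposed induction over discrete events breaks exactly here.

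\textbf{The missing ingredient} is the clipping performed \emph{by the simulation at $\tau$ itself}. When $d'$ freezes in that simulation at $y$, the algorithm clips $H^{d'}_{s}$ for all $s>y$ to the value at which $d'$ froze, and reality subsequently uses this clipped function. Hence after $y$ the real $b_{d'}$ can never exceed its simulated frozen value. Your discussion of clipping covers only two other cases: clipping that happened before $\tau$, already built into the $H$ passed to the simulation, and clipping upon service in reality. Neither addresses demands that freeze during the simulation at $\tau$.

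This is exactly the sentence the paper's proof rests on: ``the delay cost function of every demand was clipped to be no greater than the value at which it froze in the simulation.'' With it, domination holds for every known demand at every time in $[\tau,\tau_e]$, whether or not it has frozen in the simulation, and your Steps 1 and 3 then go through. One small correction in Step 3: you need only that real $b_d$ is at most simulated $b_d$, not equality, since $d$ may have been served and clipped in reality before $\tau_f$.
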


\begin{proof}
Let $\tau_e \geq \tau_{sf}$ be the time at which the simulation ended. Notice that at every time $\tau' \in [\tau,\tau_{e}]$, the actual value of a dual variable $b_d$ is at most its value at time $\tau'$ in the simulation. This is because the delay cost function of every demand was clipped to be no greater than the value at which it froze in the simulation.

This implies that, for every timestep $s \leq \tau$, $z^{it}_{is}$ and $z^{it}_s$ are not larger when the wavefront is actually at $\tau'$ than when the wavefront was at $\tau'$ in simulation. Therefore, if a dual variable froze at time $\tau_{sf} \in [\tau,\tau_e]$ in the simulation at time $\tau$, then the only way it could have frozen earlier in reality would be if it froze because of constraints $S_s$ with $s>\tau$.
\end{proof}

Now we prove a slightly stronger version of \cref{jrp_item_ordering_online_simple}. Recall that we cannot directly use \cref{jrp_item_ordering_online_simple} because we changed the threshold for the premature service step.

\begin{lemma}
Suppose an item order of item type $i$ was placed at time $\tau$. Let $F$ be the set of demands of item type $i$ that froze during the simulation at $\tau$, because of the constraints that triggered this item order of item type $i$. Then $\Delta^{D_i}_{(\tau_{old},\tau]} + \Delta^F_{\tau,sim} \geq K_i$ if $\tau_{old}$ is the time at which the previous item order of item $i$ occurred. If no such previous time exists $\tau_{old}=1$.
\label{item_orders_well_spaced}
\end{lemma}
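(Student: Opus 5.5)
We follow the two-case structure of \cref{jrp_item_ordering_online_simple}. The split is on whether the item order of item type $i$ at $\tau$ is item-phase-initiating. In both cases we charge $K_i$ to increases of $b$ variables of item-$i$ demands. These increases occur either in reality on $(\tau_{old},\tau]$ or, for demands in $F$, during the simulation at $\tau$.

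\textbf{Item-phase-initiating case.} Let $S_s$ be the constraint set that triggered the item order of $i$; this may happen in reality at $\tau$ or in the simulation at $\tau$. Its interval $(s,\cdot]$ is disjoint from the interval of the previous item order of $i$, so $s \geq \tau_{old}$.

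By fact (iii), with $z^d_s$ replaced by $z^{it}_{is}$, all $z^{it}_{is}=0$ when the wavefront was at $s$. When the order is triggered, $\sum_{(i,t)\in D_i} z^{it}_{is}=K_i$. By fact (vi), every unit of increase of $z^{it}_{is}$ is matched by an equal increase of $b_{it}$. The contributing demands therefore gained at least $K_i$ in $b$ value, and this gain falls in the real window $(s,\tau] \subseteq (\tau_{old},\tau]$ plus the simulation at $\tau$.

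The simulation portion comes only from demands that were still unfrozen while contributing to $S_s$. Such a demand either freezes in the simulation because of this triggering constraint, and so lies in $F$, or was a demand already served but semi-active. That last subcase needs care. We will argue, using the definition of $\alpha$ in the Simulation subroutine, that only the contributions of demands in $F$ to the tight $K_i$ constraint occur within the simulation. The total is then $\Delta^{D_i}_{(\tau_{old},\tau]}+\Delta^F_{\tau,sim}\ge K_i$.

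\textbf{Non-item-phase-initiating case.} Apply \cref{add_on_is_good}. The triggering demand $d$ was serviceable at $\tau_{old}$, with $b_d \geq H^d_{\tau_{old}}$ at trigger time. Every demand $d_2$ served prematurely at $\tau_{old}$ has $b_{d_2}\ge H^{d_2}_{\tau_{old}}$ by wavefront $\tau$, or by trigger time in the simulation.

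The modified threshold at $\tau_{old}$ was $K_i-\alpha'$, where $\alpha'$ is the simulation increase of frozen item-$i$ demands at $\tau_{old}$. Let $P$ be the set of demands served prematurely at $\tau_{old}$. Then $\sum_{d_2 \in P}H^{d_2}_{\tau_{old}}+H^d_{\tau_{old}}>K_i-\alpha'$, since otherwise $d$ would have been added. All of $P\cup\{d\}$ had $b=0$ at $\tau_{old}$ because they were premature.

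This step is the main obstacle. The bound obtained is against $K_i-\alpha'$, not $K_i$, and the $\alpha'$ belongs to the simulation at $\tau_{old}$, not to $\Delta^F_{\tau,sim}$. We will try to recover $\alpha'$ as real increase on $(\tau_{old},\tau]$. The simulation at $\tau_{old}$ is correct whenever the next order is non-phase-initiating (\cref{pio_incorrect_simulation}), so those simulated increases of item-$i$ demands actually happen after $\tau_{old}$. If the simulation is incorrect, \cref{incorrect_is_ok} says the demands froze no earlier than in the simulation, or froze because of a constraint with $s>\tau_{old}$. In the latter case we argue the item order at $\tau$ would be item-phase-initiating, so it falls under the first case. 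In either event the $\alpha'$ increase is part of $\Delta^{D_i}_{(\tau_{old},\tau]}$ and is disjoint from the increases of $P\cup\{d\}$. We must also check the case where item $i$ was regular at $\tau_{old}$: there the threshold was $K_i$ and $\alpha'=0$, so that case is immediate.

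The increase for $d$ may partly occur in the simulation at $\tau$. That part is covered by $\Delta^F_{\tau,sim}$ when $d$ freezes there, and otherwise the increase is real. Summing these contributions gives the claimed inequality.
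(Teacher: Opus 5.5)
Your case split and ingredients match the paper's: facts (iii) and (vi) in the item-phase-initiating case, and otherwise \cref{add_on_is_good}, the threshold $K_i-\alpha'$, and either $i$-correctness or \cref{incorrect_is_ok}. However, the step you flag in the first case is left open, and the tool you propose does not close it. Your dichotomy ``freezes because of the triggering constraint, hence in $F$'' versus ``already served but semi-active'' is a false one. The bookkeeping of $\alpha_i$ in the Simulation subroutine also says nothing about which demands feed the tight $K_i$ constraint. The missing observation is the following. Let $U$ be the item-$i$ demands whose $z^{it}_{is}$ increased during the simulation at $\tau$. Once $S_s$ becomes tight (both the $K_i$ and the $K_0$ constraint at $s$), none of them can increase further, so each freezes because of $S_s$ during the simulation, active or semi-active alike. $F$ consists of \emph{all} item-$i$ demands that froze in the simulation because of the triggering constraints, not only the active ones recorded in $D^{sim}_\tau$, so $U\subseteq F$. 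Taking $\beta$ to be the value of $\sum_{(i,t)\in D_i} z^{it}_{is}$ at wavefront $\tau$, you get $\Delta^{D_i}_{(\tau_{old},\tau]}\ge \beta$ and $\Delta^F_{\tau,sim}\ge\Delta^U_{\tau,sim}\ge K_i-\beta$.

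The second gap is in the non-item-phase-initiating case with an $i$-incorrect simulation at $\tau_{old}$. There you assert, without proof, that the order at $\tau$ ``would be item-phase-initiating'', and this is not immediate. The demand that froze early was semi-active, since it was served at $\tau_{old}$, so its freezing triggers no order. The order at $\tau$ has its own triggering demand $d$ and constraint. You would have to show that $d$ also contributes to the later tight constraint and that the latest-constraint convention selects it. The paper avoids this reclassification. The constraint $S_{\tau'}$ with $\tau'>\tau_{old}$ that froze that demand has its item-$i$ constraint tight. All $z^{it}_{i\tau'}$ were $0$ at wavefront $\tau'$ by fact (iii). By fact (vi), their rise to $K_i$ is matched by an equal rise of item-$i$ $b$ variables after $\tau'>\tau_{old}$. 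This gives the inequality directly, however the order at $\tau$ is classified.
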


\begin{proof}
If the item order of item type $i$ at time $\tau$ is item-phase-initiating (i.e. took place because of the constraints $S_s$ with $s>\tau_{old}$), then when the wavefront was at $\tau_{old}$, we have $z^{d}_{is}=z^d_s=0$ for all demands $d$. If the item order was triggered before the simulation at wavefront value $\tau$ began, then clearly at wavefront value $\tau$, $\sum_{(i,t) \in D_i} z^{it}_{is}=K_i$, implying that $\Delta^{D_i}_{(\tau_{old},\tau]} \geq K_i$. If instead it was triggered during the simulation at wavefront value $\tau_2>\tau$ during the simulation at time $\tau$, define $\beta$ as the value of $\sum_{(i,t) \in D_i} z^{it}_{is}$ when the wavefront value was $\tau$ just before the simulation at time $\tau$. We know that $\Delta^{D_i}_{(\tau_{old},\tau]} \geq \beta$. Let $U$ be the set of demands such that $z^{it}_{is}$ increased during the simulation initiated at time $\tau$. We know that at some point during the simulation $S_s$ become tight and in particular that $\sum_{(i,t) \in D_i} z^{it}_{is}=K_i$. This implies that every demand in $U$ must have frozen because of $S_s$ during the simulation. Therefore, $U \subset F$. Since $\sum_{(i,t) \in U} z^{it}_{is}$ increased by at least $K_i-\beta$, $\Delta^U_{\tau,sim} \geq K_i - \beta \geq K_i - \Delta^{D_i}_{(\tau_{old},\tau]}$. Since $U \subset F$, we have the desired result in this case.

If the item order at $\tau$ is a non-item-phase-initiating item order triggered by a demand $d$, and the simulation at $\tau_{old}$ was $i-correct$, and the add-on holding cost threshold was $K_i-\alpha$, then the actual increase in dual variables corresponding to demands of item type $i$ that froze in simulation at $\tau_{old}$ is at least $\alpha$. Moreover, by  \cref{add_on_is_good} the sum of increase in demands that were served by add-on holding cost (including in simulation at $\tau$) and $b_d \geq H^d_{\tau_{old}}$ must be at least $K_i-\alpha$ giving us the desired result.

If the item-order is non-item-phase initiating and the simulation associated with $\tau_{old}$ was not $i-$correct, then we still have that by \cref{add_on_is_good} the sum of increase of budgets in demands that were served by add-on holding cost and $b_d \geq H^d_{\tau_{old}}$ must be at least $K_i-\alpha$ (where $\alpha$ is defined similarly to the previous case). Suppose the increase in dual variables corresponding to demands of item type $i$ that froze in simulation were smaller than $\alpha$. Then some demand that froze in the simulation at $\tau_{old}$ froze earlier in reality. This implies by  \cref{incorrect_is_ok} \footnote{Technically, if we did not apply the clipping procedure, then Lemma \ref{item_orders_well_spaced} would still hold, however, the proof becomes much more complex without \cref{incorrect_is_ok}} that this demand must have frozen because of a constraint associated with time $\tau'>\tau_{old}$. This in turn would imply that the increase in dual variables of item type $i$ after $\tau'$ is at least $K_i$ implying that the total increase after $\tau_{old}<\tau'$ must also be at least $K_i$.
\end{proof}

Using the strengthened result about the change in $\sum_{(i,t) \in D_i} b_{it}$ between consecutive item orders of item $i$, we will be able to bound the add-on holding cost.

\begin{lemma}
The total add-on holding costs associated with an item order of item type $i$ at time $\tau$ are at most $\Delta^{D_i}_{(\tau_{old},\tau]}$ where $\tau_{old}$ is when the previous item order of item $i$ occurred. If there is no such previous item order, $\tau_{old}=1$.
\label{add_on_holding_is_bounded}
\end{lemma}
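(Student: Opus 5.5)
We would prove \cref{add_on_holding_is_bounded} by splitting on the type of the item order of $i$ at $\tau$ and comparing the premature-service threshold with the dual increase supplied by \cref{item_orders_well_spaced}. By construction of $\mathrm{Premature\_service}$, the add-on holding cost of the order at $\tau$ is at most the threshold used: $K_i$ if $i$ is ordered for a reason other than a demand freezing in simulation, and $K_i-\alpha_i$ if $i\in S^{\mathrm{sim}}_\tau$. Here $\alpha_i$ is the simulated increase of the $b$ variables of the demands of type $i$ that froze in the simulation at $\tau$, so $\alpha_i\ge \Delta^F_{\tau,sim}$ with $F$ as in \cref{item_orders_well_spaced}. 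It therefore suffices to show that the threshold is at most $\Delta^{D_i}_{(\tau_{old},\tau]}$.

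\textbf{Case 1: $i\in S^{\mathrm{sim}}_\tau$.} Item $i$ was added because some demand of type $i$ froze during the simulation, and the threshold is $K_i-\alpha_i$. Since the constraint that triggered the item order became tight in the simulation, the demands counted in $\alpha_i$ contain those in $F$, which gives $\alpha_i\ge \Delta^F_{\tau,sim}$. \cref{item_orders_well_spaced} gives $\Delta^{D_i}_{(\tau_{old},\tau]}\ge K_i-\Delta^F_{\tau,sim}\ge K_i-\alpha_i$. So the add-on holding cost is at most $K_i-\alpha_i\le \Delta^{D_i}_{(\tau_{old},\tau]}$.

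\textbf{Case 2: the item order was triggered before the simulation at $\tau$.} This covers regular item orders and orders placed because $\sum_{(i,t)\in D_i} z^{it}_{is}=K_i$ already at wavefront $\tau$. Then $F=\emptyset$ and the threshold is $K_i$. The first part of the proof of \cref{item_orders_well_spaced} already yields $\Delta^{D_i}_{(\tau_{old},\tau]}\ge K_i$ in the item-phase-initiating subcase, which suffices.

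The main obstacle is the non-item-phase-initiating subcase of Case 2, where the constraint $S_s$ has $s<\tau_{old}$. Here we would use \cref{add_on_is_good}: the triggering demand $d$ was serviceable at $\tau_{old}$, so at wavefront $\tau$ we have $b_d\ge H^d_{\tau_{old}}$, and every demand $d_2$ added prematurely at $\tau_{old}$ has $b_{d_2}\ge H^{d_2}_{\tau_{old}}$. Since $d$ was not added at $\tau_{old}$, these holding costs together exceed the threshold used at $\tau_{old}$. All these demands had $b=0$ at $\tau_{old}$, so their increases lie in $\Delta^{D_i}_{(\tau_{old},\tau]}$. If $\tau_{old}$ used threshold $K_i-\alpha$, we add the increase of the demands frozen in the simulation at $\tau_{old}$. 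That increase is at least $\alpha$ when the simulation is $i$-correct. When it is not, we argue through \cref{incorrect_is_ok} as in the proof of \cref{item_orders_well_spaced}: some demand froze because of a constraint at a time after $\tau_{old}$, which by itself forces an increase of at least $K_i$. Since all of these increases occur strictly within $(\tau_{old},\tau]$ and none lies in the simulation at $\tau$, the total is at least $K_i$. This dominates the add-on holding cost.
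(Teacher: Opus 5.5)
Your proof is correct. Your Case 1 is exactly the paper's proof: the threshold is $K_i-\alpha_i$ with $\alpha_i\ge\Delta^F_{\tau,sim}$, and \cref{item_orders_well_spaced} gives $K_i-\Delta^F_{\tau,sim}\le\Delta^{D_i}_{(\tau_{old},\tau]}$.

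The paper stops there because it writes the threshold uniformly as $K_i-\Delta^{D_i}_{\tau,sim}$. That overlooks item types already in $S_\tau$ before the simulation starts. For those, {\sf{ALGJRP-{final}}} uses threshold $K_i$, so one needs $\Delta^{D_i}_{(\tau_{old},\tau]}\ge K_i$ outright. The \emph{statement} of \cref{item_orders_well_spaced} gives this only when $\Delta^F_{\tau,sim}=0$.

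Your Case 2 closes this gap by reopening the \emph{proof} of \cref{item_orders_well_spaced}. In the item-phase-initiating subcase, the constraint is already tight at wavefront $\tau$. In the other subcase, every increase you use ($b_d$, the demands served prematurely at $\tau_{old}$, and the demands frozen in the simulation at $\tau_{old}$) is a real increase inside $(\tau_{old},\tau]$. So your route is the paper's, made complete at the one place where the paper is loose.

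Two small remarks. First, the assertion $F=\emptyset$ in Case 2 is not justified. A type-$i$ demand with desired time after $\tau$ can mature during the simulation and freeze on the already tight constraints $S_s$. You never use the assertion, though; if it held, the statement of \cref{item_orders_well_spaced} would finish Case 2 immediately. So just drop it.

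Second, the step ``since $d$ was not added at $\tau_{old}$, these holding costs exceed the threshold,'' which you inherit from the paper, is not literally true. Premature service stops at the \emph{first} demand that does not fit, and $d$ may come later in the $g$-order (this is the role of $t_3$ in Figure~\ref{fig:placeholder}). Apply the step to that first rejected demand $d'$ instead:
\begin{itemize}
\item $g_{d'}\le g_d$.
\item The wavefront has reached $g_d$ by time $\tau$, because $d$ was active with $b_d\ge H^d_{\tau_{old}}$.
\item $d'$ is still unserved and unfrozen, since freezing would have forced an order of item $i$ in $(\tau_{old},\tau)$.
\end{itemize}
Hence $b_{d'}\ge H^{d'}_{g_{d'}}\ge H^{d'}_{\tau_{old}}$, and your counting goes through with $d'$ in place of $d$.
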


\begin{proof}
By \cref{item_orders_well_spaced}, we know that $\Delta^{D_i}_{(\tau_{old},\tau]}+ \Delta^F_{\tau,sim} \geq K_i$. We know by construction that the threshold for add-on holding cost in the premature service step for this order of item $i$ at time $\tau$ is $K_i-\Delta^{D_i}_{\tau,sim} \leq K_i - \Delta^F_{\tau,sim} \leq \Delta^{D_i}_{(\tau_{old},\tau]}$ as desired.
\end{proof}

In order to bound the total holding cost, we also need to bound the holding cost of demands served due to the simulation (the non-add-on holding cost), which is what the following lemma does.

\begin{lemma}
The total non-add-on holding cost for a non-phase initiating order at time $\tau$ preceded by an order $\tau_{prev}$ is at most $\Delta_{(\tau_{prev},\tau]}$. For a phase-initiating order at time $\tau$, let $R_\tau$ be the set of item types $i$ such that a regular item order of type $i$ occurred at $\tau$. Then the non-add-on holding cost at time $\tau$ is at most $\sum_{i \in R_{\tau}} \Delta^{D_i}_{(\tau_{prev},\tau]} + \sum_{i \notin R_{\tau}} \Delta^{D_i,semi}_{(\tau_{prev},\tau]}$ (where $\tau_{prev}$ is the order preceding $\tau$ or $1$ if $\tau$ is the first order).
\label{simulation_is_bounded}
\end{lemma}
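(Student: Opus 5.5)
The plan is to bound, demand by demand, the holding cost $H^d_\tau$ of each demand $d$ served at $\tau$ because it froze during the simulation started at $\tau$. I would charge it to the real increase of $b_d$ between $\tau_{prev}$ and $\tau$. The key observation: if $d=(i,t)$ froze in simulation at a wavefront time $\tau_{sf}>\tau$, the freezing constraint $S_s$ has $s<\tau$. This follows from \cref{simulation_tight_only_past}, because only constraints with $s<\tau$ become tight strictly before the simulation ends, and $d$ must freeze strictly before the end in order to be served. Since $d$ contributes to $S_s$, we have $b_d \geq H^d_s$. Monotonicity of the holding cost gives $H^d_s \geq H^d_\tau$ when $t>\tau$, and if $t\le\tau$ there is no holding cost at all. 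So it suffices to show that the value $b_d$ has at wavefront time $\tau$ (before simulation) is at least $H^d_\tau$ in the right cases, or that the amount $d$ contributes is covered by the real increase.

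\textbf{Non-phase-initiating case.} Here I would argue that each simulation-frozen demand $d$ with $t>\tau$ had $b_d = 0$ at $\tau_{prev}$, since it was premature. The goal is to show that its real dual value at $\tau$ is already at least $H^d_\tau$, or else that the simulation's increase is matched by real increase before the next order. I expect the cleaner route to be the following. Because the order at $\tau$ is non-phase-initiating, its interval $(s,\tau]$ contains $\tau_{prev}$. Mimicking \cref{add_on_is_good}, any demand contributing to a constraint $S_{s'}$ with $s'<\tau$ that is tight by $\tau$ has $b_d\ge H^d_{s'}$. I would then try to show that the simulated demands' holding costs $\sum_d H^d_\tau$ are dominated by $\sum_d b_d$ evaluated at the freezing point, and that these $b_d$ are exactly values attained by increases after $\tau_{prev}$ (each such $d$ had $b_d=0$ at $\tau_{prev}$ when premature). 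Summing over the disjoint set of such demands gives the bound $\Delta_{(\tau_{prev},\tau]}$.

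\textbf{Phase-initiating case.} I would split the item types into regular ones ($i\in R_\tau$) and non-regular ones. For $i\in R_\tau$, apply the same per-demand charge, giving at most $\Delta^{D_i}_{(\tau_{prev},\tau]}$. For $i\notin R_\tau$, a simulation-frozen demand of type $i$ was unfrozen but possibly already served earlier, i.e.\ semi-active (in $X$). Since only active demands trigger new item types, the charge should go to the increase of $b_d$ while semi-active. I would argue that an active demand with $t>\tau$ contributing to a constraint with $s<\tau$ would have frozen the interval back across $\tau_{prev}$, contradicting phase-initiation. Hence those demands incurring holding cost in non-regular types must be semi-active, and their increase lies in $\Delta^{D_i,semi}_{(\tau_{prev},\tau]}$.

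\textbf{Main obstacle.} The hard step is justifying that the dual value certifying $b_d\ge H^d_\tau$ was genuinely accumulated in the real run within $(\tau_{prev},\tau]$ rather than only in the simulation. The tools I would use for this are the clipping of $d$'s cost at its simulated freezing value, \cref{incorrect_is_ok}, and fact (iii) that $z^d_{s'}=0$ for $s'>$ wavefront.
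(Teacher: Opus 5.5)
There is a genuine gap: your per-demand charge cannot work. Every demand $d=(i,t)$ that incurs non-add-on holding cost at $\tau$ has $t>\tau$, and the dual maintenance never raises $b_d$ before the wavefront reaches $t$. So in the real run $b_d=0$ throughout $(\tau_{prev},\tau]$, and its own real increase on that interval is zero. All of the value certifying $b_d\ge H^d_s\ge H^d_\tau$ is accumulated inside the simulation, i.e.\ after $\tau$.

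Consequently, the step you flag as the main obstacle, namely that this value was ``genuinely accumulated in the real run within $(\tau_{prev},\tau]$'', is not a hard step but a false one. Likewise, your ``it suffices'' claim that $b_d\ge H^d_\tau$ at wavefront $\tau$ fails whenever $H^d_\tau>0$. Your summation in the non-phase-initiating case silently substitutes simulated increase for real increase on $(\tau_{prev},\tau]$. The fallback you mention, matching the simulated increase by real increase before the \emph{next} order, charges the wrong interval. It also breaks when the next order is phase-initiating, since the simulation may then be incorrect.

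The phase-initiating case has the same confusion. Semi-active demands were already served at an earlier order and incur no holding cost at $\tau$. The demands paying non-add-on holding cost are active demands in $A$, so the $\Delta^{D_i,semi}$ terms cannot be paid by them.

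The missing idea is to argue in aggregate and route the charge through $K_0$, paying with the real increase of \emph{other} demands. During the simulation, each simulation-frozen premature demand goes from $b_d=0$ to $b_d\ge H^d_\tau$. Hence the non-add-on holding cost at $\tau$ is at most the total dual increase in the simulation, which the stopping rule caps at $K_0$. It then suffices to show that the relevant real increase over $(\tau_{prev},\tau]$ is at least $K_0$.

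For a non-phase-initiating order this is \cref{NPIO_general_bounded}, with correctness of the simulation at $\tau_{prev}$ supplied by \cref{pio_incorrect_simulation}; equivalently, it is \cref{all_general_bounded_online}. For a phase-initiating order, apply \cref{ordering_after_s} to the triggering constraints $S_s$, where $s>\tau_{prev}$. The contributing demands gained at least $K_0+\sum_{i\in S}K_i$ after $s$. For $i\in S\setminus R_\tau$ no active demand contributes, since otherwise $i$ would be a regular item order, so those contributions are semi-active increases. This yields $\sum_{i\in R_\tau}\Delta^{D_i}_{(\tau_{prev},\tau]}+\sum_{i\notin R_\tau}\Delta^{D_i,semi}_{(\tau_{prev},\tau]}\ge K_0$, which is what the statement needs.
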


\begin{proof}
Non-add-on holding cost is incurred for a demand $d$ with desired service time $\tau_2$ if $b_d$ freezes in a simulation initiated at time $\tau<\tau_2$. In this case, by \cref{simulation_tight_only_past}, at the start of the simulation, $b_d=0$ and at the end $b_d \geq H^d_{\tau}$. Therefore, the total non-add on holding cost incurred by any replenishment order is at most $K_0$, its general ordering cost. We know from \cref{NPIO_general_bounded} that for any non-phase initiating order at a time $\tau$ preceded by an order at time $\tau_{prev}$, $\Delta_{(\tau_{prev},\tau]} \geq K_0$. 

By  \cref{pio_cost_interval} we know that for a phase-initiating order at time $\tau$ with $\tau_{prev}$ defined as above an even stronger result holds. Let the constraints that triggered the order at time $\tau$ are $S_s$ ($s>\tau_{prev}$ because this is a phase initiating order) and $S$ is the set of item types $i$ such that $\sum_{(i,t) \in D_i} z^{it}_{is}=K_i$ when the wavefront was at $\tau$. Then for every item type $i \in S - R_{\tau}$, we know that, for all active demands $d \in D_i$, $z^{d}_s=z^{d}_{is}=0$ (otherwise a regular item order of type $i$ would have been placed). Therefore,  $$\sum_{i \in R_{\tau}} \Delta^{D_i}_{(\tau_{prev},\tau]} + \sum_{i \notin R_{\tau}} \Delta^{D_i,semi}_{(\tau_{prev},\tau]} \geq \sum_{i \in R_{\tau}} \Delta^{D_i}_{(\tau_{prev},\tau]} + \sum_{i \in S- R_{\tau}} \Delta^{D_i,semi}_{(\tau_{prev},\tau]} \geq K_0 + \sum_{i \in S} K_i$$ 
(where the last inequality is due to \cref{pio_cost_interval})and because the non-add-on holding costs for the order at $\tau$ are at most $K_0$, we have the desired result.
\end{proof}

Now we can derive a stronger bound on the non-regular item ordering cost.

\begin{lemma}
The non-regular item ordering cost of {\sf{ALGJRP-{final}}} is at most $B+B^{semi}-\sum_{\tau \in \mathcal{P}} \sum_{i \in R_{\tau}} \Delta^{D_i,semi}_{(\tau_{prev},\tau]}$ where $\mathcal{P}$ is the set of phase initiating orders and $\tau_{prev}$ is the order preceding $\tau$ (or $1$ if $\tau$ is the first order).
\end{lemma}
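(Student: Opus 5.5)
The plan is to charge each non-regular item order of type $i$ at time $\tau$ to a combination of the real dual increase of item $i$ since its previous item order and the simulated dual increase of the demands $F$ that froze during the simulation at $\tau$. By \cref{item_orders_well_spaced}, such an order satisfies
\[
K_i \le \Delta^{D_i}_{(\tau_{old},\tau]} + \Delta^F_{\tau,sim}.
\]
The proof then reduces to summing this inequality over all non-regular item orders and bounding the two resulting sums separately.

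\textbf{First sum (real increase).} For each fixed $i$, the intervals $(\tau_{old},\tau]$ between consecutive item orders of $i$ are disjoint. Hence $\sum \Delta^{D_i}_{(\tau_{old},\tau]}$ over the non-regular orders of $i$ is at most the total increase of $\sum_{d\in D_i} b_d$ on those intervals. Summing over $i$ gives at most $B$. I would refine this by removing the intervals attached to regular orders. If a regular item order of type $i$ occurs at a phase-initiating $\tau$, the interval $(\tau_{prev},\tau]$, or the part of $(\tau_{old},\tau]$ ending at $\tau$, is used by that regular order and not by any non-regular order. So the real-increase sum is at most
\[
B-\sum_{\tau\in\mathcal P}\sum_{i\in R_\tau}\Delta^{D_i}_{(\tau_{prev},\tau]}
\le B-\sum_{\tau\in\mathcal P}\sum_{i\in R_\tau}\Delta^{D_i,semi}_{(\tau_{prev},\tau]}.
\]
This gives the subtracted term in the statement. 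It is also harmless to keep only the semi-active part there, since the active part is at least nonnegative.

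\textbf{Second sum (simulated increase).} The demands in $F$ froze during the simulation at $\tau$. I would show that this simulated increase is later realized as a real increase of those demands while they are semi-active. They are served at $\tau$ without being frozen, so any real growth of their $b$ afterward is counted in $B^{semi}$. To make this work:
\begin{enumerate}
\item If the simulation at $\tau$ is correct, the real increase of each $d\in F$ after $\tau$ equals its simulated increase, and that increase happens while $d$ is semi-active. So $\Delta^F_{\tau,sim}$ is charged to the semi-active increase of $F$ after $\tau$. Each demand is frozen in at most one simulation, so there is no double counting, and the total is at most $B^{semi}$.
\item If the simulation is incorrect, I would use \cref{incorrect_is_ok}. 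An earlier real freeze of some $d\in F$ must come from a constraint $S_s$ with $s>\tau$, which forces a real increase of at least $K_i$ in item $i$ after $\tau$. Alternatively, by \cref{pio_pio_incorrect_simulation}, the next order is phase-initiating.
\end{enumerate}

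\textbf{Main obstacle.} The hard part is the incorrect-simulation case. There the shortfall $\Delta^F_{\tau,sim}$ minus the realized semi-active increase must be charged to something not already used in the first sum. I would first try charging it to the semi-active increase of item $i$ on the interval up to the next (phase-initiating) order. The increase certified by \cref{incorrect_is_ok} lies after $\tau$, and for orders of $i$ it lies before the next item order of $i$. This has to be checked against the subtraction for $i\in R_{\tau'}$ at that next phase-initiating order, so that the same semi-active increase is not both charged and subtracted.

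If that bookkeeping works, adding the two sums gives the claimed bound
\[
B+B^{semi}-\sum_{\tau\in\mathcal P}\sum_{i\in R_\tau}\Delta^{D_i,semi}_{(\tau_{prev},\tau]}.
\]
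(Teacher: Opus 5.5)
Your reduction via \cref{item_orders_well_spaced} and your bound on the first sum are fine. So is charging a correct simulation to the realized semi-active increase of $F$. The gap is the case you flag yourself, and the repair you suggest does not close it.

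Charging the shortfall of an incorrect simulation at $\tau$ to the semi-active increase of item $i$ on $(\tau,\tau_{next}]$ fails for two reasons. First, that semi-active increase already contains the realized increase of $F$, which you charged in case 1. Second, nothing makes the remainder large enough. The increase certified by \cref{incorrect_is_ok} (a constraint $S_s$ with $s>\tau$ going tight, hence $K_i$ of item-$i$ increase after $s$) may come from item-$i$ demands that arrived after $\tau$ and are still active. That increase is not in $B^{semi}$ at all, and if the next order of $i$ is non-regular, your first sum has already consumed it. The early real freeze also need not occur before $\tau_{next}$, since the simulation can run past a phase-initiating order triggered by an unrelated new demand. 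Working per item through the shortfall is the wrong handle.

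The missing idea is to aggregate over item types. Whenever the next order is phase-initiating, charge the whole simulated increase rather than just the shortfall, whether or not the simulation is correct. You observe that \cref{pio_incorrect_simulation} makes the next order $\tau_{next}$ phase-initiating, but you never use it. In any simulation the total increase over all items is at most $K_0$, and the argument of \cref{simulation_is_bounded} (via \cref{pio_cost_interval}) gives
\[
\sum_{i\in R_{\tau_{next}}}\Delta^{D_i}_{(\tau,\tau_{next}]}+\sum_{i\notin R_{\tau_{next}}}\Delta^{D_i,semi}_{(\tau,\tau_{next}]}\ \ge\ K_0 .
\]
The first term is exactly the regular-item budget your first sum left unused. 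So your remark that discarding its active part is ``harmless'' is backwards: that active part is what pays for these simulations, and this is why the statement subtracts only the semi-active part.

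Simulations followed by a non-phase-initiating order are correct and end no later than $\tau_{next}$. Their charges therefore lie in $(\tau,\tau_{next}]$, disjoint from the intervals used above. With this split, the semi-active increase is charged at most once beyond the first sum, and for $i\in R_\tau$ on $(\tau_{prev},\tau]$ the full increase is charged at most once. That yields the stated bound.
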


\begin{proof}
By \cref{item_orders_well_spaced}, we know that the cost of any non-regular item order of item type $i$ at a time $\tau$ (with $\tau_{old}$ defined to be when the previous item order of type $i$ occurred, or if no previous item order of type $i$ exists, $1$) is at most $\Delta^{D_i}_{(\tau_{old}.\tau]}+\Delta^F_{\tau,sim}$ where $F$ is the set of demands of item type $i$ that froze in the simulation at $\tau$. We will charge $\Delta^{D_i}_{(\tau_{old},\tau]}$ to the increase in $b$ variables of item type $i$ between $\tau_{old}$ and $\tau$. If the replenishment order (not necessarily of item $i$) succeeding $\tau$, $\tau_{next}$ is not phase initiating, then by \cref{pio_incorrect_simulation}, we can charge $\Delta^{F}_{\tau,sim}$ to the semi-active increase in dual variables of item type $i$ on the interval $(\tau,\tau_{next}]$.

So far, for any non-phase initiating order $\tau$ and the order that preceded it $\tau_{prev}$ ($\tau_{prev}=1$ if $\tau$ is the first order), we have charged the total increase in $b$ variables as the algorithm moved from $\tau_{prev}$ to $\tau$  at most once and the semi-active increase of $b$ variables at most a second time if $\tau$ is not a phase-initiating order. If $\tau$ is a phase-initiating order, then the increase in dual variables of item type $i$ on the interval $(\tau_{prev},\tau]$ has been charged once if the first item order of type $i$ at $\tau$ or later is non-regular and $0$ times if it is regular.

If we let $A$ be the set of orders succeeded by a phase initiating order, then the total item ordering cost left to charge is at most $\sum_{\tau_{prev} \in A} \Delta^{F_{\tau_{prev}}}_{\tau_{prev},sim} \leq K_0 |A|$ where $F_{\tau_{prev}}$ is the set of demands that froze in a simulation at $\tau_{prev}$. Let $\tau$ be the phase initiating order succeeding $\tau_{prev}$. By the same arguments as \cref{simulation_is_bounded}, we know that for each such $\tau$, $\sum_{i \in R_{\tau}} \Delta^{D_i}_{(\tau_{prev},\tau]}+\sum_{i \notin R_{\tau}} \Delta^{D_i,semi}_{(\tau_{prev},\tau]} \geq K_0$. Therefore, we can charge $\Delta^{F_{\tau_{prev}}}_{\tau_{prev},sim}$ to the increase in $b$ variables of item types in $R_{\tau}$ on the interval $(\tau_{prev},\tau]$ and the semi-active increase of $b$ variables of item types not in $R_\tau$. 

At this point, for every phase initiating order $\tau$, the total increase in $b$ variables of item types in $R_{\tau}$ on the interval $(\tau_{prev},\tau]$ has been charged at most once while for item types not in $R_{\tau}$, the total active increase in corresponding $b$ variables on the same interval has been charged at most once and the total associated semi-active increase has been charged twice. 
\end{proof}

Because every demand is served before it becomes semi-active or inactive, we can actually derive a sharper bound on the delay costs of the algorithm.

\begin{lemma}
The delay cost of {\sf{ALGJRP-{final}}} is at most $B^{active}$.
\label{delay_is_bounded}
\end{lemma}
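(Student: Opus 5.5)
The plan is a per-demand charging argument. I will show that for every demand $d=(i,t)$, the delay cost it incurs is at most the increase of $b_d$ accumulated while $d$ was active. Summing over demands then gives the bound $B^{active}$. Demands served at or before $t$ incur no delay cost, since they pay only holding cost. So I only need to treat a demand $d$ that is served at some time $s>t$.

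The first step is to identify when such a $d$ can be served. There are three possibilities.
\begin{itemize}
\item[(a)] $d$ is the demand whose freezing triggered the order at $s$.
\item[(b)] $d$ is a mature demand of an item type included in the order at $s$, and is served together with it. It may be active at that moment, or it may have frozen earlier due to another constraint while still unserved.
\item[(c)] $d$ froze in the simulation started at some $\tau<s'$; here I read $s$ as the moment of the real order $\tau$ while the freezing happened at simulated time $s'$.
\end{itemize}

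The second step is to check that, at the moment of service, $d$ is still active and satisfies the invariant $b_d=H^d_s$. Fact (ii) carried over from Section 3 gives this. While $d$ is unserved and unfrozen, $b_d$ tracks $H^d_\cdot$ exactly, and every bit of this increase happens while $d$ is active. The only way $b_d$ stops tracking is a freeze, and an active freeze immediately triggers an order at that time, which serves $d$. This is the line ``place an order'' when $(i,t)\in A$ and $b_{it}<H^{it}_{\tau+1}$. Hence $H^d_s\le b_d$ at the service time, and the whole value of $b_d$ up to then was accrued while active. (Here $b_d=0$ up to $t$ and $H^d_t=0$.)

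The third step treats what happens after service. Once $d$ is served at $s$ without being frozen, it becomes semi-active and $b_d$ may continue to grow. By the clipping rule, however, $H^d_{\tau}\gets\min(H^d_\tau,H^d_s)$ for $\tau>s$, so all further growth of $b_d$ counts toward $B^{semi}$ and not toward $B^{active}$. The cost actually paid is $H^d_s$, and clipping does not change it because service precedes clipping. Therefore the delay cost of $d$ equals $H^d_s$, which equals the active increase of $b_d$.

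The main obstacle is case (c): demands served because they froze in a simulation. The issue is that their delay is evaluated at the real order time $\tau$, not at the simulated time. If the demand is already mature at $\tau$, it is served at $\tau$ with cost $H^d_\tau\le$ its current real $b_d$, and that amount was accrued while active. The simulated increase is never actually realized as active increase, because after service the demand is moved to $X$ (semi-active) and clipped at the simulated freeze value. If the demand is premature at $\tau$, it pays holding cost rather than delay cost, so it does not count here. Lemma \ref{incorrect_is_ok} ensures that clipping keeps the real $b_d$ no larger than the simulated value, so no active increase is double-counted. With these cases handled, summing $H^d_{s(d)}\le b^{active}_d$ over all $d$ gives the lemma.
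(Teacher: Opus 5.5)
Your proposal is correct and follows essentially the same argument as the paper: a demand stops being active exactly when it is served (an active freeze triggers an order that serves it), and while it is active $b_d$ tracks its delay cost, so the delay it pays is at most the increase of $b_d$ accrued while active. Your case split into triggering, co-served mature, and simulation-frozen demands unpacks the paper's one-line proof in more detail; note that the sub-case in (b) of a demand that froze earlier but is still unserved is vacuous by your own second step, and that post-service growth counts toward $B^{semi}$ because $d$ is semi-active after service, not because of clipping.
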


\begin{proof}
A demand becomes semi-active or inactive the moment it's served. While a demand is active, the dual variable is always equal to the delay cost incurred by the demand. Therefore we have the desired result.
\end{proof}

\begin{lemma}
{\sf{ALGJRP-{final}}} is 5-competitive.
\end{lemma}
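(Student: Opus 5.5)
The plan is to bound the five cost components of {\sf{ALGJRP-{final}}} separately against the dual objective $B=\sum_{d\in D} b_d$. The dual solution is feasible by fact (i), so weak duality gives $B\le \OPT$. The components are general ordering, regular item ordering, non-regular item ordering, holding, and delay. The bookkeeping decomposes $B=B^{active}+B^{semi}$. It also splits, per order interval $(\tau_{prev},\tau]$, into the increase on item types in $R_\tau$ and the increase on the remaining types. The aim is: general ordering plus regular item ordering at most $B$ in aggregate, non-regular item ordering plus delay at most $2B$, and holding at most $2B$.

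\textbf{Step 1 (general and regular item ordering).} By \cref{all_general_bounded_online}, each order's $K_0$ is charged to $\Delta_{(\tau_{prev},\tau]}$. For a phase-initiating order $\tau$, \cref{pio_cost_interval} gives $\Delta_{(\tau_{prev},\tau]}\ge K_0+\sum_{i\in S}K_i\ge K_0+\sum_{i\in R_\tau}K_i$. So the general cost and the regular item cost of that order are covered by the dual increase on the same interval. Only phase-initiating orders have regular item orders, and the intervals $(\tau_{prev},\tau]$ are disjoint. Summing therefore shows that general plus regular item ordering is at most $B$.

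\textbf{Step 2 (non-regular item ordering plus delay).} By the preceding lemma, non-regular item ordering is at most $B+B^{semi}-\sum_{\tau\in\mathcal P}\sum_{i\in R_\tau}\Delta^{D_i,semi}_{(\tau_{prev},\tau]}$. By \cref{delay_is_bounded}, delay is at most $B^{active}$. Adding these and using $B^{active}+B^{semi}=B$ gives at most $2B$.

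\textbf{Step 3 (holding).} Add-on holding cost is handled by \cref{add_on_holding_is_bounded}: for each item order of type $i$ it is at most $\Delta^{D_i}_{(\tau_{old},\tau]}$. These intervals are disjoint per item type, so the total add-on holding cost is at most $B$. Non-add-on holding cost is handled by \cref{simulation_is_bounded}: it is at most $\Delta_{(\tau_{prev},\tau]}$ per non-phase-initiating order. For a phase-initiating order it is at most $\sum_{i\in R_\tau}\Delta^{D_i}+\sum_{i\notin R_\tau}\Delta^{D_i,semi}$, which is also at most $\Delta_{(\tau_{prev},\tau]}$. Summing over the disjoint order intervals gives at most $B$. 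Total holding is therefore at most $2B$.

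Adding the three bounds gives total cost at most $B+2B+2B=5B\le 5\OPT$.

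\textbf{Main obstacle.} I expect the hardest part to be Step 1 combined with the other bounds. Step 1 charges the phase-initiating interval $(\tau_{prev},\tau]$ once for $K_0+\sum_{i\in R_\tau}K_i$. The same interval is also charged once by the non-regular item bound and once by the non-add-on holding bound. I must verify that these charges go to three different budgets ($B$, $2B$, $2B$) and are never counted twice within a single budget. In particular, the subtracted term $\sum_{i\in R_\tau}\Delta^{D_i,semi}$ in the non-regular lemma must not be needed elsewhere. If the accounting threatens to overlap, my first remedy is to drop that negative term, which only weakens the bound and so is harmless. I would then check that each of the three sums is separately bounded by $B$, $2B$ and $2B$ respectively.
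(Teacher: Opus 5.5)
Your proposal is correct and follows the paper's own proof essentially step for step. General plus regular item ordering is at most $B$ via the phase-initiating and consecutive-order bounds, non-regular item ordering plus delay is at most $(B+B^{semi})+B^{active}=2B$, and add-on plus non-add-on holding is at most $B+B=2B$, with weak duality giving $B\le\OPT$. The double-counting worry in your last paragraph is moot: each of the three groups is bounded by its own multiple of $B$ independently, so reusing the same dual increase across groups is harmless, which is exactly how the paper adds the bounds.
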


\begin{proof}
By \cref{pio_cost_interval} and  \cref{NPIO_general_bounded}, the total general ordering cost and the regular item ordering cost is at most $B$.
By \cref{delay_is_bounded}, the delay costs are at most $B^{active}$. By  \cref{add_on_holding_is_bounded}, the add-on holding costs are at most $B$. We know by \cref{simulation_is_bounded} that the non-add-on holding cost is at most $B$. The non-regular item ordering cost by using  \cref{item_orders_well_spaced} in conjunction with  \cref{simulation_is_bounded} is at most $B+B^{semi}$. Adding all this together, we get a $5$-competitive algorithm.
\end{proof}

\section{Improving the single-item lot-sizing competitive ratio to $1+\phi$} \label{sec:phi}

We obtain an improved competitive ratio for the single-item lot-sizing problem, by a simple modification to our algorithm: instead of including additional items provided that the holding cost incurred is under $K$, we will add items until the holding cost reaches $(\phi-1)K \approx 0.618K$.

We will need to distinguish between two kinds of orders. Let $s'$ be the constraint that triggered a replenishment order at timestep $s$. We call the replenishment order at timestep $s$ phase-initiating if there were no replenishment orders placed during the interval $(s',s)$. For any subset of the demands $S$, let $\Delta^S_{(a,b]}$ be the change in $\sum_{t \in S} b_t$ from time $a$ to time $b$. 

\textbf{Phase-initiating orders (PIOs):} Between a phase initiating order at time $s$ triggered by a constraint that became tight at time $s'$ and the order immediately before it at time $s_{prev}$ (let $s_{prev}=0$ if no such order exists), we can conclude that $\Delta_(s_{prev},s] \geq K$ because when the algorithm was at time $s_{prev}$, $\sum_{t} z^t_{s'}=0$. Any demand with desired service time $t \leq s_{prev}$ would have been served by a previous order. Therefore, the total delay cost of demands served by this order is at most $\sum_{t \in (s_{prev},s]} b_t \leq \Delta_{(s_{prev},s]}$. The total holding cost associated with this order is, for some $\alpha \geq 0$ $(0.618-\alpha_s)K \leq (0.618-\alpha_s) \Delta_{(s_{prev},s]}$ and the total ordering cost is at most $K \leq \Delta_{(s_{prev},s]}$. Therefore the total ordering, delay and holding cost associated with the order at time $s$ is at most $(2.618-\alpha_s)\Delta_{(s_{prev},s]}$.

\textbf{Non-phase-initiating orders (NPIOs):} Let $s$ be a non-phase-initiating order triggered by demand $t$ and constraint $s_{trig}$ let $s_{prev}$ be the order that immediately preceded $s$. Let $S_{prev}$ be the set of demands that were served at time $s_{prev}$ for a holding cost. Using arguments similar to  \cref{ordering_costs_single_item_3}, we can conclude that $\Delta^{S_{prev}}_{(s_{prev},s]} \geq (0.618-\alpha_{s_{prev}})K$ and $b_t \geq \alpha_s$ since this order has been triggered by $s_{trig} \leq s_{prev}$. Let $S^{delay}$ be the set of demands served at $s$ for a delay cost except $t$ and $S^{hold}$ the set of demands served at $s$ for holding cost. We know that $\Delta_{(s_{prev},s]} > 0.618K + \Delta^{S^{delay}}_{(s_{prev},s]}+ b_t - \alpha_{s_{prev}}K $.

Therefore, the ordering cost associated with time $s$ is at most $\phi ( \Delta_{(s_{prev},s]} - \Delta^{S^{delay}}_{(s_{prev},s]}  - b_t + \alpha_{s_{prev}}K)$. The holding cost associated with time $s$ is at most $( \Delta_{(s_{prev},s]} - \Delta^{S^{delay}}_{(s_{prev},s]}  - b_t + \alpha_{s_{prev}}K - \alpha_s K)$. Since the delay cost for demands other than $t$ is at most $\Delta^{S^{delay}}$ and the delay cost for $t$ is $b_t$, we get that the total cost corresponding to demands served in this order is $$(\phi+1)(\Delta_{(s_{prev},s]}) +(\phi+1)\alpha_{s_{prev}}K-\phi b_t-\alpha_sK \leq (\phi+1)(\Delta_{(s_{prev},s]}) + \alpha_{s_{prev}}K-\alpha_s K.$$

When summing over a PIO and the NPIOs that follow it, the last two terms telescope, leaving us with a total cost of at most $2.618\OPT$.

\section{Non-monotone holding-delay costs costs}

We will now show that even the single-item lot-sizing problem with non-monotone holding-delay costs generalizes set cover. Note that this shows hardness for even the offline version of the problem. To our knowledge, this straightforward observation has not appeared in prior work on the JRP.

\begin{lemma}
An $\alpha$-approximation algorithm for single-item lot-sizing with holding-delay costs implies an $\alpha$-approximation for set cover.
\end{lemma}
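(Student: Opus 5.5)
We give an approximation-preserving reduction from set cover to single-item lot-sizing with non-monotone holding-delay costs. Let the set cover instance have elements $U=\{e_1,\dots,e_n\}$ and sets $S_1,\dots,S_m$, with unit costs (weighted set cover works the same way if we allow different order costs at different times). Each set $S_j$ becomes a timestep $j\in[m]$, and the order cost is $K=1$. Each element $e$ becomes a demand $t_e$ whose cost function is
\[
H^{t_e}_s=\begin{cases}0 & \text{if } e\in S_s,\\ \infty \text{ (or a huge value } M) & \text{otherwise.}\end{cases}
\]
This function is clearly non-monotone. The paper's normalization $H^{t}_t=0$ only asks that the desired service time be some timestep with zero cost, so we set each demand's desired service time to the first $s$ with $e\in S_s$. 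We assume every element lies in some set. The timestep-splitting convention (at most one demand per desired time) is handled by splitting timesteps as the paper does; we just duplicate timestep copies, with identical zero-cost patterns, so nothing changes.

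Next we establish the correspondence in both directions.
\begin{itemize}
\item[] \emph{Cover to schedule.} Given a cover $\mathcal{C}$, place orders at the times $\{j : S_j\in\mathcal{C}\}$ and serve each demand $t_e$ at some ordered time $j$ with $e\in S_j$. The holding-delay cost is $0$, so the total cost is $|\mathcal{C}|$.
\item[] \emph{Schedule to cover.} Take any schedule of cost less than $M$ (we choose $M>m$, so even the trivial schedule ordering at every time has smaller cost). Every demand must then be served at a time $s$ with $e\in S_s$. Hence the set of order times forms a cover, and its size is at most the schedule's cost.
\end{itemize}
Together these show $\OPT_{LS}=\OPT_{SC}$.

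Finally, we run the $\alpha$-approximation on the lot-sizing instance. It returns a schedule of cost at most $\alpha\OPT_{LS}\le \alpha m$. We need this cost to be below $M$, so we set $M>\alpha m$; if $\alpha$ is not known, we use $H=\infty$ or $M$ exponentially large, which is still polynomial in bit-length. The order times of this schedule then give a cover of size at most $\alpha\OPT_{SC}$. The reduction is clearly polynomial.

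The main obstacle is reconciling the construction with the model's bookkeeping conventions: the requirement $H^t_t=0$, the arrival times with infinite cost before arrival, and integer costs. The fixes are to let every demand arrive at time $1$, to choose each desired service time as a zero-cost time, and to replace infinite costs with a large finite $M$ when integrality is required.
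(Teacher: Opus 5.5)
Your proposal is correct and uses the same reduction as the paper: timesteps stand for sets, demands stand for elements, $0/\infty$ (or large $M$) costs encode membership, the ordering cost is $1$, and a cost-preserving correspondence runs in both directions. The only difference is bookkeeping: the paper gives each element a private desired service time $m+i$ and then has to move any order placed at a time $>m$ to $f(t)\le m$, whereas you put each desired service time at the first set containing the element (splitting timesteps as needed), so every order time maps directly to a set and no rerouting step is needed.
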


\begin{proof}
Given an instance of the set cover problem with a collection of elements $U=u_1,u_2...u_n$ and family of subsets of $U$ $\mathcal{S} = S_1,S_2...S_m$. We construct an instance with $m+n$ timesteps. The $i$th element has desired service time $m+i$ and, for $k \leq m$, $H^{m+i}_{k} = 0$ if $u_i \in S_k$ and $H^{m+i}_k=\infty$ if $u_i \notin S_k$. For $k > m$, $H^{m+i}_{k}=0$ if $k=m+i$ and $H^{m+i}_{k}=\infty$ otherwise. The ordering cost is always $1$.

Clearly any solution with finite cost serves all demands for a holding-delay cost of $0$. Therefore, if a solution of finite cost opens $C$ orders, it has objective value $C$. Given a solution that opens $C$ orders, we can construct a solution that opens $C$ orders using only the first $m$ timesteps as follows. For each timestep $t$ from $m+1$ to $m+n$, if the single-item solution opens an order at time $t$, let $f(t)$ be the smallest timestep such that $H^{t}_{f(t)}=0$. If the set cover instance is feasible, then there is some set containing the element corresponding to demand $t$, which would imply that $f(t) \leq m$. Close the order at time $t$ and open an order at time $f(t)$. Serve $t$ at $f(t)$ instead of at $t$. We know that $H^{t'}_{t}=\infty$ for all demands $t' \neq t$. Therefore, doing this does not leave any demand unserved. Repeating this process for all $t \in [m+1,m+n]$ gives us a solution that only places orders in the first $m$ timesteps.

Now, we find a set cover that uses only $C$ sets. Let $O$ be the set of timesteps in which orders are placed. This means that every demand $t$ had $0$ holding cost at some timestep in $O$. By construction, this means the collection $\{S_k : k \in O\}$ is a valid set cover of size $C$.

Therefore, the optimal value of the single-item lot-sizing instance is at most the optimal value of the set-cover instance.

Now we prove the reverse: Given a solution to the set cover instance that is a family $F$ of sets, we can construct a solution for the single-item lot-sizing instance which opens an order at time $k$ if $S_k \in F$ and serves every demand $u_i$ at the smallest $k$ such that $u_i \in S_k$. Since the set-cover solution $F$ covered every element, the single-item lot-sizing solution serves every demand and has the same cost as the set cover solution.

Therefore, the optimal value of the set-cover instance is at most the optimal value of the single-item lot-sizing instance. Since we already proved the reverse of this earlier, we know that the optimal value of both instances is the same. Moreover, we saw that any solution of cost $C$ for single-item lot-sizing can be converted into a solution of no-greater cost for the set cover instance. This implies that an $\alpha-$approximation algorithm for single item lot-sizing with non-monotone holding-delay costs can be used to obtain an $\alpha$-approximation for set cover. 
\end{proof}
\bibliographystyle{siamplain}
\bibliography{references}
\end{document}